\documentclass[]{article}

\usepackage[utf8x]{inputenc}
\usepackage{graphicx}
\usepackage{amsmath}
\usepackage{amssymb}
\usepackage{braket}
\usepackage{hyperref}
\usepackage{cleveref}
\usepackage{geometry}
\usepackage{xcolor}
\usepackage{amsthm}
\usepackage[font=small,labelfont=bf]{caption}
\usepackage{booktabs}
\usepackage[table]{xcolor} 
\usepackage{subcaption}
\usepackage{quantikz}
\usepackage{soul}
\usepackage{enumitem}
\setlist{nosep}
\usepackage{float}
\usepackage{multicol}
\usepackage{algorithm}
\usepackage{dirtree}
\usepackage{tcolorbox}
\usepackage{fancyvrb}
\usepackage[frozencache,cachedir=.]{minted}

\tcbuselibrary{minted,breakable,xparse,skins}
\usepackage{tikz}
\usetikzlibrary{fit,arrows.meta,positioning,shapes.geometric}
\newcommand\undermat[2]{%
  \makebox[0pt][l]{$\smash{\underbrace{\phantom{%
    \begin{matrix}#2\end{matrix}}}_{\text{$#1$}}}$}#2}

\definecolor{bg}{gray}{0.95}
\DeclareTCBListing{mintedbox}{O{}m!O{}}{%
  breakable=true,
  listing engine=minted,
  listing only,
  minted language=#2,
  minted style=default,
  minted options={%
    linenos,
    gobble=0,
    breaklines=true,
    breakafter=,,
    fontsize=\small,
    numbersep=8pt,
    #1},
  boxsep=0pt,
  left skip=0pt,
  right skip=0pt,
  left=25pt,
  right=0pt,
  top=3pt,
  bottom=3pt,
  arc=5pt,
  leftrule=0pt,
  rightrule=0pt,
  bottomrule=2pt,
  toprule=2pt,
  colback=bg,
  colframe=orange!70,
  enhanced,
  overlay={%
    \begin{tcbclipinterior}
    \fill[orange!20!white] (frame.south west) rectangle ([xshift=20pt]frame.north west);
    \end{tcbclipinterior}},
  #3}

\usepackage[
	backend=biber,
	style=numeric,
    sorting=none,
	sortcites=true,
	url=true,
	eprint=true,
	giveninits=false,
    uniquename=true,
	minnames=1,
	maxnames=3
	]{biblatex}

\AtBeginBibliography{
  \hypersetup{
    urlcolor=black,
    hidelinks
  }
}
\AtEveryBibitem{
	\clearname{editor}%
	\clearfield{series}%
	\clearfield{isbn}%
	\clearfield{issn}%
    \clearfield{issue}
	\clearfield{day}%
	\clearfield{month}%
	\clearfield{place}%
	\clearlist{location}%
    \clearfield{doi}%
	\clearfield{eventtitle}%
	\ifentrytype{inproceedings}{%
		\clearlist{publisher}
	}{}%
	\ifentrytype{misc}
    {%
    }
    {%
		\clearfield{url}%
		\clearfield{urlyear}%
	}%
}
\title{QuickQudits: A Framework for Efficient Simulation of Noisy Qudit Clifford Circuits
via an Extended Stabilizer Tableau Formalism}

\makeatletter
\renewcommand\@date{{%
  \vspace{-\baselineskip}%
  \large\centering
  \begin{tabular}{@{}c@{}}
    Nina Brandl\textsuperscript{1} \\
    \normalsize nina.brandl@jku.at
  \end{tabular}%
  \quad
  \begin{tabular}{@{}c@{}}
    Mykyta Cherniak\textsuperscript{1} \\
    \normalsize mykyta.cherniak@jku.at
  \end{tabular}
  \quad
    \begin{tabular}{@{}c@{}}
    Johannes Kofler\textsuperscript{1} \\
    \normalsize johannes.kofler@jku.at
  \end{tabular}%
  \quad
  \begin{tabular}{@{}c@{}}
    Richard Kueng\textsuperscript{1} \\
    \normalsize richard.kueng@jku.at
  \end{tabular}%
  \bigskip

  \textsuperscript{1}Department for Quantum Information and Computation at Kepler (QUICK)\\ 
  Johannes Kepler University Linz, Austria

  \bigskip

  \text{24. March, 2026}
}}
\makeatother

\begin{document}

\maketitle
\newtheoremstyle{mythrm}%
    {4pt}{4pt}
    {}
    {}
    {\bfseries}
    {:}
    {0.5em}
    {}
\theoremstyle{mythrm}
\newtheorem{definition}{Definition}
\newtheorem{proposition}{Proposition}
\newtheorem{theorem}{Theorem}
\newtheorem{fact}{Fact}
\newtheoremstyle{myexample}%
    {4pt}{4pt}
    {}
    {}
    {\itshape}
    {:}
    {0.5em}
    {}
\theoremstyle{myexample}

\newcommand*\newtushtheorem[1]{%
  \AddToHook{env/#1/before}{\par\vspace{\baselineskip}\hrule width \hsize \kern 1mm \hrule width \hsize height 2pt}%
  \AddToHook{env/#1/after}{\hrule\vspace{\baselineskip}}%
  \newtheorem{#1}%
}
\newtushtheorem{example}{Example}

\newcommand*\newconttheorem[1]{%
  \AddToHook{env/#1/before}{\par\vspace{\baselineskip}\hrule width \hsize \kern 1mm \hrule width \hsize height 2pt}%
  \AddToHook{env/#1/after}{\hrule\vspace{\baselineskip}}%
  \newtheorem*{#1}%
}
\newconttheorem{excont}{\continuation}
\newcommand{\continuation}{??}
\newenvironment{continueexample}[1]
 {\renewcommand{\continuation}{\Cref{#1}}
 \par\vspace{\baselineskip}\hrule width \hsize \kern 1mm \hrule width \hsize height 2pt
 \excont[continued]}
 {\endexcont
 \hrule\vspace{\baselineskip}
 }

\newtheorem{corollary}{Corollary}

\begin{abstract}
We present a comprehensive and self-contained framework for the efficient classical simulation of Clifford circuits acting on $d$-dimensional qudits, including realistic Pauli/Weyl noise via stochastic simulation. Our approach uses the stabilizer tableau formalism for qudits of arbitrary dimension and tracks both stabilizer and destabilizer generators under Clifford updates. The classical simulation remains efficient with simple algebraic Clifford update rules over $\mathbb{Z}_d$. 
Computational basis measurements in prime dimensions are handled by a generalized Aaronson-Gottesman (CHP) procedure. 
In composite dimensions, $\mathbb{Z}_d$ is not a field and the standard tableau reduction fails, so we employ an exact Smith normal form decomposition to enable efficient sampling.
Noise is modelled as probabilistic mixtures of Weyl operators that act only on the tableau's phase column. For fast simulation of noisy circuits, we support Pauli frames, respectively generalized Weyl frames, and introduce a noise-pushing technique that allows all noise processes to be consolidated into a single phase update at the end of the circuit. Using this representation, circuit fidelity can be determined entirely by the single accumulated phase-shift parameter $\Delta \tau$, reducing fidelity estimation to a simple phase check per shot.
Our codebase supports tableau simulation and conventional state-vector and density-matrix backends for qudits, and also includes circuit and tableau visualisations. Additionally, we provide tests and Jupyter notebooks for validation and illustration. This framework forms the basis for a scalable, open-source strong+weak stabilizer simulator including noise and can be found publicly available at \url{https://github.com/QUICK-JKU/QuickQudits}.
\end{abstract}

\begin{figure}[t!]
    \centering
    \includegraphics[width=\linewidth]{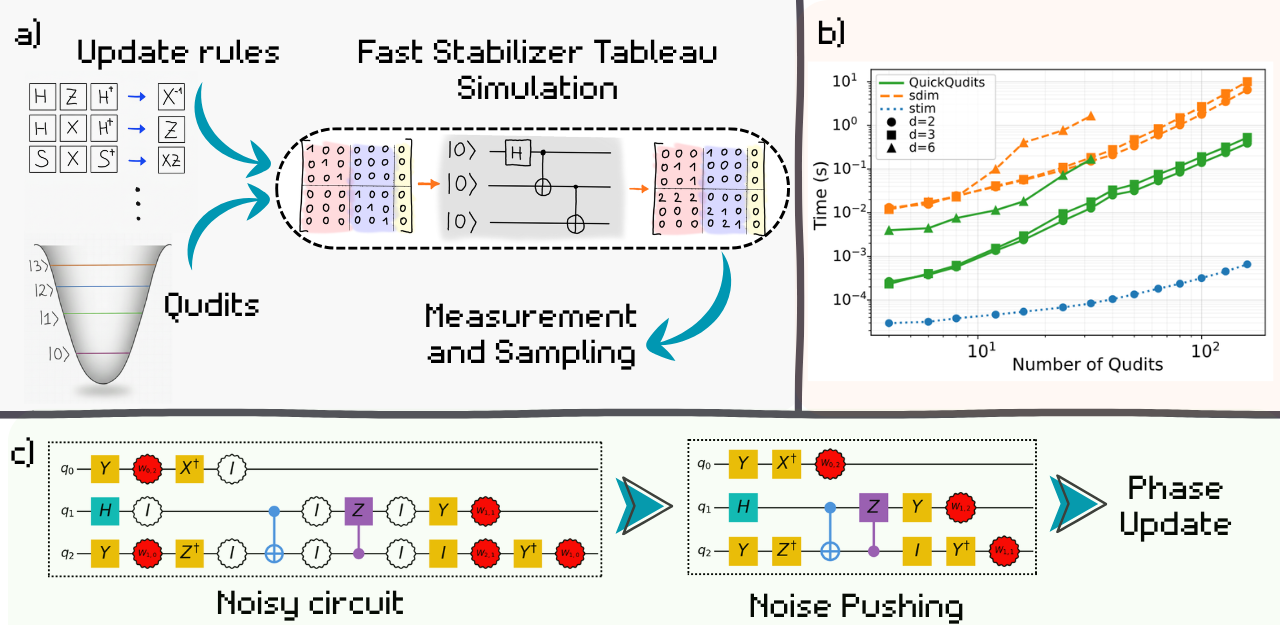}
    \caption{Overview of the qudit Clifford simulator and its main components. 
\textbf{a)} The simulator implements algebraic update rules for Clifford gates within the stabilizer tableau formalism, enabling efficient simulation of $n$-qudit circuits of local dimension $d$, including measurement, exact sampling, and visualization utilities. 
\textbf{b)} We benchmark the runtime against state-of-the-art stabilizer simulators such as \emph{Sdim} and \emph{Stim}, demonstrating competitive performance (see \Cref{sub:compare}). 
\textbf{c)} Noisy Clifford circuits are supported via Pauli frames and a noise-pushing technique that converts stochastic Weyl errors into explicit tableau phase updates for circuit-fidelity estimation.}
\label{fig:placeholder}
\end{figure}

\begin{multicols}{2}

\section{Introduction}
The simulation of quantum circuits on classical hardware remains one of the central topics in quantum information science. While arbitrary quantum computations require exponential classical resources, as far as we know, certain restricted classes can be simulated efficiently. Among them is the class of Clifford circuits, which are composed exclusively of Clifford gates, stabilizer-state inputs and computational basis measurements.

The Gottesman-Knill theorem \cite{gottesman1998heisenberg} shows that efficient classical simulations of Clifford circuits are possible for multi-qubit systems. However, many experimental platforms such as trapped ions \cite{ringbauer2022universal}, superconducting circuits \cite{fischer2023universal}, and photonic systems naturally exhibit an underlying Hilbert space that is higher-dimensional. The study of such qudits ($d$-level systems) is not new, but recently gained much traction in the community, because qudits offer novel approaches to tackle some of the most important challenges in the field. The qudit algebraic structure, for example, provides insights into higher-level quantum error-correcting codes \cite{ketkar2006nonbinary} and dimension-dependent noise resilience. It also proposes new challenges for efficient simulation. In this paper, we first explain the necessary theoretical background and then present our fully self-contained and open-source framework for the efficient and scalable simulation of noisy qudit Clifford circuits. \\
During the preparation of this paper, another qudit stabilizer tableau simulator called \emph{Sdim} has been published \cite{kabir2025sdim}. Our simulator toolbox has been developed independently and without prior knowledge of this work and serves as an extension to earlier results by one of the present authors \cite{brandl2024efficient}. Our simulator supports all local dimensions, tracks stabilizer phases via both Pauli-frame and phase-update representations, provides reduced post-measurement tableaus along with custom circuit visualizations and a strong matrix-vector simulator. In contrast, existing tools like \textit{Sdim} or \textit{Stim} \cite{gidney2021stim} are either limited in the dimension support, lack full phase tracking, or do not offer explicit post-measurement tableau updates or unified noise semantics. A more comprehensive comparison is given in \Cref{sub:compare}.

A strong simulator computes exact probabilities for all measurement outcomes, while a weak simulator samples outcomes according to the correct distribution without computing probabilities. Our framework supports weak simulation efficiently for all dimensions, while strong simulation with full post-measurement tableaus is currently limited to prime $d$.
In order to build a practical quantum simulator, we need to consider several key concepts: 
\begin{description}[leftmargin=10pt]
    \item[Representation:] We need a uniform and well defined algebraic representation \cite{hostens2005stabilizer} for storing stabilizer and destabilizer information and to systematically encode quantum states. We begin with the general definitions for qudits and the stabilizer formalism in higher dimensions in \Cref{sec:qudits}. We then adopt the extended \emph{stabilizer tableau formalism}, which stores the Pauli generator and phase exponents (see \Cref{sub:tableau}). This structure enables direct symplectic manipulation without the explicit matrix exponentiation in Hilbert space.
    \item[Manipulation:] The simulator must implement efficient update rules that describe the behavior of Clifford operations on stabilizers. We derive the full set of gate transformations -- including (qudit generalizations of) $H, S, CNOT, CZ, SW\!AP$ and $X,Z,Y$ -- and prove their symplectic application. The explicit tableau update rules can be found in \Cref{tab:update} and \Cref{sec:simulation}.
    \item[Measurement:] To simulate a computational basis measurement and derive post-measurement states, we generalize the Aaronson-Gottesman CHP algorithm \cite{aaronsonImprovedSimulationStabilizer2004} to prime $d$ dimensions. This method separates deterministic results from random measurement outcomes and performs row operations modulo $d$ to maintain consistency within the tableau. The complete algorithmic description and analysis can be found in \Cref{sub:measurement}. Furthermore, we highlight why composite dimensions are problematic for this kind of measurement algorithm in \Cref{sub:composite} and offer a possible solution.
    \item[Noise:] Realistic simulations require the inclusion of noise. We model such effects with Weyl operator channels, which describe noise as probabilistic mixtures of generalized Pauli errors. The theoretical foundations and sampling strategy are presented in \Cref{sec:noise}.
    \item[Application:] Lastly, we discuss possible applications in \Cref{sec:applications}. 
    Here we describe how the developed noise pushing algorithm can be used for the computation of circuit fidelity \cite{emerson2005scalable}. We also list promising future applications in the conclusion and outlook section.
\end{description}
The remainder of this work is dedicated to the refinement of these ideas.

\paragraph{Outline}
The remainder of this manuscript is organized as follows.
In \Cref{sec:qudits} we review the stabilizer formalism extended to multi-qudit systems. \Cref{sec:simulation} uses this formalism to efficiently simulate Clifford circuits (Gottesman-Knill) as well as (partial and complete) single- and multi-shot measurements. \Cref{sec:code} presents our code -- the \emph{QuickQudits} python library -- and \Cref{sec:noise} discusses different ways to simulate noise in Clifford circuits. 
In \Cref{sec:applications} we provide efficient circuit fidelity estimation as a concrete application and then conclude the manuscript with a discussion in \Cref{sec:discussion}.
    
\section{Qudit Stabilizer Formalism}
\label{sec:qudits}
Information carriers in quantum computers are physical systems that, in general, exist in a high-dimensional Hilbert space of dimension $d$. Consequently, logical information can be stored using $d$ levels in so-called \emph{qudits}, whereas the more widely researched \emph{qubits} are just a special instance with $d=2$.

A single qudit inhabits a Hilbert space $\mathcal{H}_d$ spanned by the orthonormal computational basis 
$
\{ \ket{0}, \ket{1}, \dots, \ket{d-1}\}
$
such that any state $\ket{\psi} \in \mathcal{H}_d$ can be written as a superposition
$
\ket{\psi} = \sum_{i=0}^{d-1} \alpha_i \ket{i}$ with complex-valued amplitudes $\alpha_i \in \mathbb{C}^d$ that obey $ \sum_i |\alpha_i|^2 = 1.
$

Quantum registers over multiple qudits can be built using the Kronecker product (or tensor product).
Hence a composite system over $n$ qudits lives in the Hilbert space $\mathcal{H}_d^{\otimes n}$ and has dimension $d^n$.\\
\subsection{Generalized Pauli and Weyl Operators}
Unitary evolution operators on qudits are $d \times d$ matrices $U$, obeying $U^\dagger = U^{-1}$. A special set of quantum operators are the (generalized) \emph{Pauli matrices} \cite{pauli1927quantenmechanik}. In dimension $d$, they are defined as the \emph{shift} matrix $X$ and the phase (or \emph{boost}) matrix $Z$ 
\begin{align*}
&X\ket{j} = \ket{j+1 \bmod d}\\
&Z\ket{j} = \omega^{j}\ket{j}, \quad \omega = e^{\frac{2\pi \mathrm{i}}{d}}.
\end{align*}
We introduce a secondary phase $\tau = \exp\!\big({\frac{\pi \mathrm{i} (d^2+1)}{d}}\big)$ with $\tau^2  = \omega$ \cite{kueng2015qubit, gross2006hudson}. 
The Pauli-$Y$ operator is defined as $Y = \tau^{-1}XZ$.

We can extend these elementary Pauli matrices by using the general Weyl operator notation
$$
W(a,b) = \tau^{-ab}X^aZ^b, \quad a,b \in \mathbb{Z}_d.
$$
For dimension $d$, there are $d^2$ such Weyl operators. We provide an example for $d=3$ in \Cref{tab:weyl} below.
The qudit Pauli group on $n$ qudits can be then defined as
$$
\mathcal{P}_d^{(n)} = \left\{ \tau^{k} W^{(n)}(\mathbf a,\mathbf b) \; | \;\mathbf  a, \mathbf b \in \mathbb{Z}_d^{(n)}, k \in \mathbb{Z}_{2d} \right\}
$$
where $W^{(n)}(\mathbf a,\mathbf b) = \bigotimes_{i=0}^{n-1} W(a_i, b_i)$.

\begin{table*}[t!]
\centering
\caption{Weyl Operators $W(a,b) = \omega^{ab}X^a Z^b$ for dimension $d = 3$, here $\tau^{-ab} = \omega^{ab}$.}
\label{tab:weyl}
\[
\setlength{\arraycolsep}{5pt}
\renewcommand{\arraystretch}{1.0} 
\begin{array}{ccc}
W(0,0) = 
\begin{pmatrix} 
1 & 0 & 0 \\ 
0 & 1 & 0 \\ 
0 & 0 & 1 
\end{pmatrix} 
& 
W(0,1) = 
\begin{pmatrix} 
1 & 0 & 0 \\ 
0 & \omega & 0 \\ 
0 & 0 & \omega^2 
\end{pmatrix} & 
W(0,2) = 
\begin{pmatrix} 
1 & 0 & 0 \\ 
0 & \omega^2 & 0 \\ 
0 & 0 & \omega 
\end{pmatrix} 
\\[1.6em] 
W(1,0) = 
\begin{pmatrix} 
0 & 0 & 1 \\ 
1 & 0 & 0 \\ 
0 & 1 & 0 
\end{pmatrix} & 
W(1,1) = 
\begin{pmatrix} 
0 & 0 & 1 \\ 
\omega & 0 & 0 \\ 
0 & \omega^2 & 0 
\end{pmatrix} & 
W(1,2) = 
\begin{pmatrix} 
0 & 0 & 1 \\ 
\omega^2 & 0 & 0 \\ 
0 & \omega & 0 
\end{pmatrix} 
\\[1.6em] 
W(2,0) = 
\begin{pmatrix} 
0 & 1 & 0 \\ 
0 & 0 & 1 \\ 
1 & 0 & 0 
\end{pmatrix} & 
W(2,1) = 
\begin{pmatrix} 
0 & 1 & 0 \\ 
0 & 0 & \omega \\ 
\omega^2 & 0 & 0 
\end{pmatrix} & 
W(2,2) = 
\begin{pmatrix} 
0 & 1 & 0 \\ 
0 & 0 & \omega^2 \\ 
\omega & 0 & 0 
\end{pmatrix}
\end{array}
\]
\end{table*}

\paragraph{Commutation Relations:}
We generalize the elementary commutation relation $ZX = \omega XZ$ to 
$$
W(a,b)W(a',b')=\tau^{a'b-ab'}W(a+a',b+b').
$$
This formula shows that Weyl operators form a projective representation of the additive Abelian group $\mathbb{Z}_d^2$. Recall that a set $\mathcal{A}$ together with a binary operation $*$ is a group if the operation is associative and the set has an identity element and an inverse element. The group is \emph{Abelian} if the operation is also commutative, i.e.\ $w_1 * w_2 = w_2 * w_1$ for all $w_1,w_2 \in \mathcal{A}$. The Weyl operators themselves do not commute, but this non-commutativity is fully captured by the projective phase factor.

The product of two Weyl operators follows the group addition law up to a phase factor, which is determined by the antisymmetric exponent $a'b - ab'$.

\paragraph{Symplectic Form:}

To capture the algebraic relationship that defines the phase factor, we introduce the symplectic form on the vector space $\mathbf{v} = (a, b) \in \mathbb{Z}_d^2$:
$$\langle\langle (a,b), (a',b') \rangle\rangle = a'b - ab' \pmod d.$$
Hence, swapping the order of the operators introduces a phase factor determined by the symplectic form,
$$W(a,b)W(a',b') = \tau^{\langle\langle (a,b), (a',b') \rangle\rangle} W(a',b')W(a,b).$$
In words: Two operators $W(\mathbf{v})$ and $W(\mathbf{v}')$ commute if and only if their symplectic form is zero, $\langle\langle \mathbf{v}, \mathbf{v}' \rangle\rangle = 0 \pmod d$. This symplectic structure is fundamental for defining stabilizer groups in the next section, as all elements in this group must commute.

\subsection{Stabilizer Group}
The stabilizer formalism~\cite{gottesman1997stabilizer} provides a compact and highly structured framework for describing a broad class of quantum states, namely those that can be defined as common $+1$ eigenstates of a commuting set of Pauli operators.

\begin{definition}[Stabilizer]
A unitary matrix $U$ is a \emph{stabilizer} of quantum state $\ket{\psi}$ if and only if 
$$U\ket{\psi} = \ket{\psi}.$$ This is the case if $\ket{\psi}$ is a $+1$ eigenstate of $U$. 
\end{definition}
Stabilizers are typically elements of the Weyl group and unitary, so their eigenvalues are well-defined phase factors lying on the complex unit circle. This ensures meaningful projective measurements. Their products are also unitary, preserving the group structure even under Clifford evolution.
\begin{definition}[Stabilizer Group]
The \emph{stabilizer group} $\mathcal{S}$ of a state $\ket{\psi}$ is formed by all Weyl matrices that stabilize this state. This is an Abelian subgroup $\mathcal{S}^{(n)}$ of the full Weyl group $\mathcal{W}_d^{(n)}$.
$$\mathcal{S}^{(n)} \!\subset \! \mathcal{W}_d^{(n)} = \left\{ \bigotimes_{i=0}^{n-1} \tau^{k_i}\!X^{a_i}\!Z^{b_i} \; 
\Big| \; a_i,\!b_i \! \in \! \mathbb{Z}_d, 
\; k_i \!\in \!
\mathbb{Z}_{2d} \right\}\!.$$
In odd dimensions, $k$ is always even and we can replace $\tau^k$ with $\omega^{k/2}$.
\end{definition}
Such a set uniquely identifies a state and it can be efficiently characterized by a set of $n$ \emph{stabilizer generators} $\mathcal{G}= \left\{g_0,\ldots,g_{n-1} \right\}$ denoted as $\langle \mathcal{G} \rangle $. These then generate the full stabilizer group comprised of $d^n$ elements via multiplication.

\begin{example}
The qubit ($d=2$) Bell state $\ket{\Phi_2^+} = \frac{1}{\sqrt{2}} (\ket{00} + \ket{11})$ is stabilized by $\{ II, XX, ZZ, -YY \}$ and the generators are $\langle XX, ZZ \rangle $.
\end{example}
\begin{table*}[t!]
\centering
\caption{The full stabilizer tableau includes generators $\langle g_0 \dots g_{n-1} \rangle$ for the \emph{destabilizers} and generators $\langle g_n \dots g_{2n-1} \rangle$ for the \emph{stabilizers}. The left (red) block holds the exponents for the Pauli-$X$ matrices, the middle (blue) block holds the exponents for the Pauli-$Z$ matrices and the right (yellow) column stores the exponents for the $\tau$ phase. In odd $d$ each $r_i$ must be even.}
\label{tab:tableau}
\[
\left.
\begin{array}{l}
\text{destabilizers}  \\[50pt]
\text{stabilizers}
\end{array}
\right.
\begin{array}{c}
g_0  \\[4pt]
\vdots \\[4pt]
g_{n-1} \\ [4pt]
g_n  \\[4pt]
\vdots \\[4pt]
g_{2n-1} \\
\end{array}
\!\!\!
\left[
\begin{array}{
*{3}{>{\columncolor{red!10}}c}
!{\vrule width 1pt}
*{3}{>{\columncolor{blue!10}}c}
!{\vrule width 1pt}
>{\columncolor{yellow!10}}c
}
x_{0,0} & \cdots & x_{0,n-1} & z_{0,0} & \cdots & z_{0,n-1} & r_0 \\[4pt]
\vdots & \ddots & \vdots & \vdots & \ddots & \vdots & \vdots \\[4pt]
x_{n-1,0} & \cdots & x_{n-1,n-1} & z_{n-1,0} & \cdots & z_{n-1,n-1} & r_{n-1} \\
\noalign{\vskip 2pt}\hline\noalign{\vskip 2pt}
x_{n,0} & \cdots & x_{n,n-1} & z_{n,0} & \cdots & z_{n,n-1} & r_n \\[4pt]
\vdots & \ddots & \vdots & \vdots & \ddots & \vdots & \vdots \\[4pt]
x_{2n-1,0} & \cdots & x_{2n-1,n-1} & z_{2n-1,0} & \cdots & z_{2n-1,n-1} & r_{2n-1}
\end{array}
\right]
\]
\end{table*}
\subsubsection{Destabilizers}
Destabilizers complement stabilizers by adding structure and information. While stabilizers capture the effect of Clifford operations on a state, the combined set of stabilizers and destabilizers forms a Weyl-basis, so a Clifford gate is uniquely defined by its action on both. Tracking these updates is discussed in \Cref{sec:simulation}.
\begin{definition}[Set of destabilizers]
For a given stabilizer group $\mathcal{S}$ with $n$ elements $\{s_0, s_1, \dots, s_{n-1}\} \! \subset \! \mathcal{W}_d^{(n)}$, the associated \emph{destabilizers} are a commuting set of  $n$ Weyl operators $\mathcal{D} = \{d_0, d_1, \dots, d_{n-1}\} \subset \mathcal{W}_d^{(n)}$  such that each element satisfies
$$
s_i d_j = \tau^{k\delta_{ij}} d_j s_i.
$$
That is, each $d_j$ anticommutes with its corresponding stabilizer ($i=j$) and commutes with all others ($i \neq j$).
Here $\delta_{ij}$ is the Kronecker Delta and $k$ is an integer that describes the anticommutation phase. 
\end{definition}

For qubits ($d=2$) the relation above reduces to the standard anticommutation $s_i d_j = -d_j s_i$ \cite{aaronsonImprovedSimulationStabilizer2004}. For qudits, stabilizers and destabilizers also anticommute “up to a phase,” which must be chosen carefully to preserve the tableau structure. Here we set $k=2(d-1)$, giving $\tau^{k}=\tau^{-2}=\omega^{-1}$, so each stabilizer–destabilizer pair satisfies the correct generalized commutation relation. With this choice, $\{\mathcal{S},\mathcal{D}\}$ generates the entire $n$-qudit Weyl group $\mathcal{W}_d^{(n)}$.
The destabilizers are essential for the Clifford circuit simulation described in \Cref{sec:simulation}, as they form the dual basis used in the Aaronson-Gottesman CHP algorithm to compute measurement outcomes and post-measurement updates efficiently. The stabilizer group encodes the quantum state, while the destabilizers carry the missing phase-space information. Together they form a basis of Weyl operators, so that the action of a Clifford gate on both stabilizers and destabilizers uniquely determines that gate. In the stabilizer tableau formalism described in the next section, the combined stabilizer + destabilizer matrix is invertible under symplectic transformations mod $d$; without destabilizers, information is lost and the transformation can no longer be inverted uniquely. Destabilizers also help reconstruct the explicit Pauli expansion of the state.

\subsection{Clifford group}
\begin{definition}[Clifford Group]
The $n$-qudit \emph{Clifford group} $\mathcal{C}_d^{(n)}$ is the set of unitaries that normalize the generalized Pauli or Weyl group
$$
\mathcal{C}_d^{(n)} = \left\{ U \in \mathcal{U}_{d^n} | U \mathcal{P}_d^{(n)} U^\dagger = \mathcal{P}_d^{(n)} \right\}.
$$ 

\end{definition}

\noindent A Clifford gate conjugates any Weyl operator to another Weyl operator,
$$UW(a,b)U^\dagger = \tau^{\Phi(a,b)} W(a',b'),$$
up to a global phase. As a consequence, the action of a Clifford circuit on a stabilizer state can be described by a linear mapping of the stabilizer generators.
If $g \in \mathcal{S}$ stabilizes $\ket{\psi}$, then the transformed state $U\ket{\psi}$ (with $U$ Clifford) is stabilized by $UgU^\dagger$, since
$$g\ket{\psi}=\ket{\psi}\ \Rightarrow\ (UgU^\dagger)\,U\ket{\psi}=U\ket{\psi}.$$
Consequently, the new stabilizer group is
$$
\mathcal{S}' = USU^\dagger = \langle Ug_0U^\dagger, \dots, Ug_{n-1}U^\dagger \rangle
.$$
This insight is commonly referred to as the \emph{Gottesman--Knill theorem} \cite{aaronsonImprovedSimulationStabilizer2004}:
We can efficiently simulate a quantum circuit on classical hardware in polynomial time, if:
\begin{itemize}
	\item[-] The input state is composed of computational basis states.
	\item[-] The circuit only applies Clifford gates.
	\item[-] Measurements at the end are again in the computational basis.
\end{itemize}

\begin{table*}[t!]
\caption{The simulation of Clifford gates on a stabilizer tableau is done by following these update rules. Each operation is applied row-wise to the tableau, the column is given by the qudit index $i$. Any daggered gate applies the same updates with all the additive terms negated. For example $S^\dagger: \mathbf{z_i \leftarrow z_i - x_i}$. The updates on $X$ and $Z$ are computed modulo $d$, phase updates $r$ are modulo $2d$.}
\label{tab:update}
\centering
\setlength\tabcolsep{0.25em}
\begin{tabular}{llll}
\toprule
\textbf{Gate} & \textbf{X update (mod $d$)} & \textbf{Z update (mod $d$)} & \textbf{Phase (r) update (mod $2d$)} \\
\midrule
$I$ & -- & -- & -- \\
$X$ & -- & -- & $\mathbf{r \leftarrow r - 2z_i}$ \\
$Z$ & -- & -- & $\mathbf{r \leftarrow r + 2x_i}$ \\
$Y$ & -- & -- & $\mathbf{r \leftarrow r + 2(x_i - z_i)}$ \\
$W(a,b)$ & -- & -- & $\mathbf{r \leftarrow r + 2(bx_i - az_i)}$ \\
$H$ & $\mathbf{x_i \leftarrow -z_i}$ & $\mathbf{z_i \leftarrow x_i}$ & $\mathbf{r \leftarrow r + 2x_iz_i}$ \\
$S$ & -- & $\mathbf{z_i \leftarrow z_i + x_i}$ & $\mathbf{r \leftarrow r + \Delta}$, where $\mathbf{\Delta} = 
\begin{cases}
\mathbf{x_i}^2, & d \text{ even}\\
\mathbf{x_i}(\mathbf{x_i} - \mathbf{1}), & d \text{ odd}
\end{cases}$ \\
CNOT$_{c,t}$ & $\mathbf{x_t \leftarrow x_t + x_c}$ & $\mathbf{z_c \leftarrow z_c - z_t}$ & -- \\
CZ$_{c,t}$ & -- & $\mathbf{z_c \leftarrow z_c + x_t,\; z_t \leftarrow z_t + x_c}$ & $\mathbf{r \leftarrow r + 2x_cx_t}$ \\
SWAP & $\text{swap}(\mathbf{x_i}, \mathbf{x_j})$ & $\text{swap}(\mathbf{z_i}, \mathbf{z_j})$ & -- \\
\bottomrule
\end{tabular}
\end{table*}

\subsection{Stabilizer Tableaus}
\label{sub:tableau}

The stabilizer tableau representation provides the computational backbone for efficient stabilizer simulation. We encode both stabilizers and destabilizers as rows in a structured matrix, with dedicated columns for $X$-exponents and $Z$-exponents over $\mathbb{Z}_d$ and $\tau$-phase exponents over $\mathbb{Z}_{2d}$. Clifford gates then act by simple algebraic transformations on this matrix which is called a tableau. The tableau not only determines the state through its stabilizer rows but also enables efficient state updates, measurement sampling and noise incorporation without ever constructing explicit exponential-size state vectors or density matrices. The representation was first explained in Ref.~\cite{aaronsonImprovedSimulationStabilizer2004} and generalizes naturally to higher-dimensional qudits \cite{brandl2024efficient, de2011linearized}.
The full visualization for stabilizer tableaus can be found in \Cref{tab:tableau}.
In short, an $n$-qudit stabilizer state can be encoded in a $2n \times (2n + 1)$ integer type tableau
$$T = [X | Z | r]$$
where columns $[0\!:\!n\!-\!1]$ hold $X$ exponents, columns $[n\!:\!2n\!-\!1]$ hold $Z$ exponents and the last column $[2n]$ holds phase exponents $\tau \pmod{2d}$. The top $n$ rows are the destabilizers followed by $n$ stabilizer rows.
The destabilizers are optional for the simple state representation, but become very important in the simulation process and the measurement.

\noindent Let row \(g_i = (\mathbf x_i \mid \mathbf z_i \mid r_i)\), where \(\mathbf x_i, \mathbf z_i \in \mathbb{Z}_d^n\) are row vectors of length \(n\).  
Two rows \(g_i\) and \(g_j\) commute if and only if their symplectic inner product vanishes:
$$
\langle\!\langle g_i, g_j \rangle\!\rangle = \mathbf x_i \cdot \mathbf z_j - \mathbf x_j \cdot \mathbf z_i = 0 \pmod{d}.
$$
This is exactly the symplectic form on the $2n$-dimensional vector space over $\mathbb{Z}_d$, so Clifford updates must preserve this symplectic form to maintain commutation relations between stabilizers.
In higher dimensions, there are more possible ways for a destabilizer $d_i$ and corresponding stabilizer $s_i$ to anticommute. For the simulations, we choose to force them to always anticommute with a phase $\omega^{-1}$:
$$s_id_i = \omega^{-1}d_is_i.$$

\section{Clifford Circuit Simulation}
\label{sec:simulation}

Clifford circuit simulation relies on the fact that Clifford unitaries normalize the Pauli group. Under conjugation, any generalized Pauli operator is mapped to another Pauli operator. This property allows the evolution of a stabilizer state to be tracked entirely through the transformation of its stabilizer generators. In the qudit case, these transformations correspond to symplectic linear mappings over $\mathbb{Z}_d$. By applying the appropriate update rules gate by gate, we obtain a complete and efficient classical simulation algorithm for Clifford circuits, including entangling operations, multi-qudit interactions and measurements. This section provides the explicit tableau update rules and contains a discussion of potential limitations depending on the qudit dimension, followed by an algorithmic complexity analysis.

\subsection{Update Rules}
\label{sub:updates}

Tableau updates can be performed purely algebraically over $\mathbb{Z}_d$, avoiding explicit $d^n \times d^n$ matrices and can be found in \Cref{tab:update}, see also \cite{gheorghiu2014standard} . Each gate operation acts row-wise on the tableau, and $X$ and $Z$ entries are computed modulo $d$; instead of the more common $\omega$ phase, we use $\tau$ phases modulo $2d$ to better include even dimensions.

\subsection{Composite dimensions}
\label{sub:composite}
For composite dimensions $d$, the arithmetic structure of $\mathbb{Z}_d$ changes fundamentally:
it is no longer a field, but a ring and therefore many elements become zero divisors. 
As a consequence, several properties that hold in prime dimensions fail in the composite case. In particular, stabilizer groups may exhibit degeneracies, such as repeating eigenvalues or generators with non-trivial kernel, such that more than $n$ generators may be required to uniquely describe a state \cite{gheorghiu2014standard}.

\noindent For the simulation of Clifford circuits, composite dimensions do not pose any difficulties, since the update rules require only modular addition and do not rely on invertibility.
The measurement procedure in the next section relies on multiplicative inverses and Gaussian elimination modulo $d$, which can fail to produce valid pivots in composite dimensions. The linearized stabilizer formalism (Sec.~IV of \cite{de2011linearized}) addresses this by using an improper extended tableau, allowing updates in composite dimensions.

However, we can still implement a sampling algorithm for composite dimensions and it can be found in \Cref{sub:measurement}. We achieve this by formulating the sampling procedure in terms of exact Smith normal form (SNF) decompositions \cite{hostens2005stabilizer}. Other simulators like Sdim \cite{kabir2025sdim} use ring-based elimination or exact Diophantine solving. Specifically, we rely on SymPy's \cite{10.7717/peerj-cs.103} \texttt{smith\_normal\_decomp} function, which computes the exact SNF together with the corresponding unimodular transformations -- this guarantees correctness for arbitrary composite $d$.
Consequently, the simulator is restricted to exact sampling of measurement outcomes rather than constructing post-measurement tableaus. Noise modeling, however, does not rely on tableau reduction and remains fully supported in all dimensions, as discussed in \Cref{sec:noise}.
\subsection{Measurement/Sampling}
\label{sub:measurement}

Measurement plays a central role in stabilizer-based simulation because it updates not only the state but also the algebraic constraints that define it. The outcome of a measurement depends on the commutation relationship between the measured Pauli operator and the current stabilizer generator. If it commutes with all stabilizers, the outcome is deterministic; if it anticommutes with at least one, the result is uniformly random over $\mathbb{Z}_d$ in prime $d$.
The CHP algorithm, originally described by Gottesman and Aaronson in \cite{aaronsonImprovedSimulationStabilizer2004}, provides an efficient method for simulating Clifford circuits on qubits, including projective measurements in the computational basis. 
\end{multicols}
\begin{algorithm}
\caption{Generalized Prime-Dimension Qudit Measurement Algorithm}
\label{algo:chp}
\begin{enumerate}
	\item Perform a projective measurement of a Pauli operator $M$ on qudit $i$. For the remainder of this paper, we restrict the measurements to the computational ($Z$) basis.
	\item \textbf{Case 1 (Random outcome):}  If $M$ anticommutes with any stabilizer $s_j$, the measurement outcome is random over the $d$ possible eigenvalues of $M$.
    This occurs, if there exists a stabilizer row with $x_{ji} \neq 0$.
	\begin{enumerate}
		\item Choose the pivot row $p$ as the first stabilizer row $s_p$ where $x_{pi} \neq 0$ (anticommutes with $M$).
        \item Compute the modular inverse $m = x_{pi}^{-1} \pmod{d}$ and normalize the pivot row $s_p \leftarrow s_p^{m}$ such that $x_{pi} = 1$.
		\item Eliminate every other anticommuting (destabilizer and stabilizer) generator row $g_q$ via row operations $$g_q \leftarrow g_q \cdot s_p^{-x_{qi}}.$$
		\item Choose a random integer $k \in \mathbb{Z}_d$ as the measurement outcome.
		\item Update the post-measurement tableau: 
		\begin{itemize}
            \item[-] Replace the original destabilizer $d_p$ associated with $s_p$ by setting $d_p = s_p$.
			\item[-] Introduce a new stabilizer row representing the measurement result: $$s_{\text{new}} = \tau^{-2k} M$$ where $M = Z_i$ is the measurement operator and $k$ the measurement outcome.
            \item[-] Replace the measured row with $s_p \leftarrow s_{\text{new}}$.
		\end{itemize}
	\end{enumerate}
	\item \textbf{Case 2 (Deterministic outcome):} 
    If $M$ commutes with all stabilizer $s_j$, the measurement outcome is deterministic and we only need to determine it.
    \begin{enumerate}
        \item Multiply all stabilizer rows that have an associated destabilizer row that anticommutes with $M$.
        $$g = \prod_{\substack{j=0 \\ x^{(d_j)}_{ji}\neq 0}}^{n-1} s_j^{x_{ji}^{(d_j)}}$$
        The resulting row has a single $z_{ji} = 1$ and the deterministic measurement result is encoded in its phase. 
        \item Compute the measurement outcome as $k = -\frac{\tau(g)}{2} \pmod{d}$.
        \item There is no need to update the tableau, as the deterministic result has no influence on other measurements.
    \end{enumerate}
    \item Return the measurement outcome and repeat for each measured qudit $i$.
\end{enumerate}
\end{algorithm}
\begin{multicols}{2}
We generalized this algorithm to qudits of prime dimension $d$ by using our extended tableau formalism (\Cref{sub:tableau}) and the generalized Clifford gate operations (\Cref{sub:updates}).
The stabilizer group defines the subspace of the current quantum state and provides the constraints that the state must satisfy. However, when we apply a Clifford gate or perform a measurement, these constraints change and we must update the stabilizers (and destabilizers) to reflect the new state.
The destabilizers serve as a complementary set of generators and track how stabilizers transform under Clifford operations and are crucial for determining the deterministic measurement outcome. \\
In \Cref{algo:chp} we describe the general structure of our extended measurement algorithm. \\
For composite dimensions, the Smith normal form (SNF) approach allows exact sampling of measurement outcomes but does not produce an updated post-measurement tableau. The procedure, shown in \Cref{algo:snf} and following \cite{hostens2005stabilizer}, enables the sampling of computational-basis outcomes for arbitrary qudit dimensions while respecting all stabilizer constraints. These results can then be used for Monte Carlo simulations, noise analysis, or benchmarking, even when tableau reduction is not feasible.
\end{multicols}
\begin{algorithm}
\caption{Composite-Dimension Qudit Measurement via Smith Normal Form}
\label{algo:snf}
\begin{enumerate}
    \item Identify the \emph{diagonal subgroup} of the stabilizer group:
    \begin{enumerate}
        \item For integer coefficient vectors $\mathbf u \in \mathbb{Z}_d^n$, a product of stabilizer generators is \emph{diagonal} iff its $X$-part cancels:
        $$
        X_s^T \mathbf u \equiv 0 \pmod{d},
        $$
        where $X_s$ are the rows of the stabilizer $X$-block.
        \item Compute a set of kernel generators $\{\mathbf u^{(1)},\dots,\mathbf u^{(r)}\}$ for $X_s^T$ over $\mathbb{Z}_d$
        (i.e., $X_s^T \mathbf u^{(\ell)} \equiv 0 \pmod d$) via \emph{Smith normal form} (SNF).
        \item Form the corresponding diagonal stabilizer $S(\mathbf u^{(\ell)}) = \prod_{j=0}^{n-1} s_j^{u^{(\ell)}_j}$ for each generator $\mathbf u^{(\ell)}$.
    \end{enumerate}
    \item Convert each diagonal stabilizer into a linear congruence over $\mathbb{Z}_d$:
    $$
    \mathbf b \cdot \mathbf m \equiv -\tau^{(\ell)}/2 \pmod{d},
    $$
    where $\mathbf b \in \mathbb{Z}_d^{n}$ is the $Z$-exponent vector and $\tau^{(\ell)}$ the phase exponent of $S(\mathbf u^{(\ell)})$. The computational-basis measurement outcomes $\mathbf m \in \mathbb{Z}_d^{n}$ must satisfy this relation.
    \item Stack all such constraints to form a system:
    $$
    B \mathbf m \equiv \mathbf c \pmod{d},
    $$
    where $B\in\mathbb{Z}_d^{r\times n}$, and $\mathbf c \in \mathbb{Z}_d^{r}$ contains the phase constraints $-\tau^{(\ell)}/2 \pmod d$.
    \item Compute the SNF of $B$:
    $$
    D = U B V,
    $$
    with unimodular matrices $U, V$ and diagonal $D$. Transform the system to
    $$
    D \mathbf y \equiv U \mathbf c \pmod{d}, \quad \mathbf m = V \mathbf y \pmod{d}.
    $$
    \item Solve the transformed system:
    \begin{enumerate}
        \item For each diagonal entry $s_i \neq 0$, solve $s_i y_i \equiv (U \mathbf c)_i \pmod d$, by reducing modulo $\gcd(s_i, d)$ and computing a particular solution $y_0$.
        \item Identify free variables corresponding to zero entries in $D$, which can take any value in $\mathbb{Z}_d$.
        \item For nontrivial gcds, uniformly sample allowed values consistent with the congruence.
    \end{enumerate}
    \item Map back to the original variables:
    $$
    \mathbf m = V \mathbf y \pmod{d}.
    $$
    \item Repeat for each desired measurement shot to sample outcomes exactly. Return the sampled outcomes for the measured qudits.
\end{enumerate}
\end{algorithm}

\begin{figure}[h]
\centering
\begin{subfigure}[t]{0.3\textwidth}
\centering
\[
\left[
\begin{array}{ccc|ccc|c}
     0 & 0 & 0 & 1 & 0 & 0 & 0 \\
     0 & 0 & 0 & 0 & 1 & 0 & 0 \\
     0 & 0 & 0 & 0 & 0 & 1 & 0 \\ 
\noalign{\vskip 2pt}\hline\noalign{\vskip 2pt}
     1 & 0 & 0 & 0 & 0 & 0 & 0 \\
     0 & 1 & 0 & 0 & 0 & 0 & 0 \\
     0 & 0 & 1 & 0 & 0 & 0 & 0
\end{array}
\right]
\]
\caption{Initial full tableau for $\ket{000}$.}
\end{subfigure}
\hfill
\begin{subfigure}[t]{0.3\textwidth}
\centering
\raisebox{-50pt}{
\begin{quantikz}
\lstick{$\ket{0}$} & \gate{H} & \ctrl{1} & \qw  & \rstick{$q_0$}\\
\lstick{$\ket{0}$} & \qw      & \targ{}  & \ctrl{1} & \rstick{$q_1$} \\
\lstick{$\ket{0}$} & \qw      & \qw      & \targ{} & \rstick{$q_2$}
\end{quantikz}
}
\caption{Circuit for GHZ state.}
\end{subfigure}%
\hfill
\begin{subfigure}[t]{0.3\textwidth}
\centering
\[
\left[
\begin{array}{ccc|ccc|c}
     0 & 0 & 0 & 1 & 0 & 0 & 0 \\
     0 & 1 & 1 & 0 & 0 & 0 & 0 \\
     0 & 0 & 1 & 0 & 0 & 0 & 0 \\ 
\noalign{\vskip 2pt}\hline\noalign{\vskip 2pt}
     2 & 2 & 2 & 0 & 0 & 0 & 0 \\
     0 & 0 & 0 & 2 & 1 & 0 & 0 \\
     0 & 0 & 0 & 0 & 2 & 1 & 0
\end{array}
\right]
\]
\caption{Target tableau for the qutrit GHZ state.}
\end{subfigure}
\caption{Stabilizer tableau evolution for the preparation of the three-qutrit ($d=3$) GHZ state $|\mathrm{GHZ}_3^{(3)}\rangle = \frac{1}{\sqrt{3}}\, (|000\rangle + |111\rangle + |222\rangle )$. Here, $H$ is a generalized Hadamard gate and the entangling gate is an extension of CNOT to qudits.}
\label{fig:qutrit_ghz}
\end{figure}
\begin{multicols}{2}

\begin{example}
\label{ex2}

In this example we simulate the measurement of a GHZ state for $n=3$ qutrits ($d=3$). The GHZ state is prepared in \Cref{fig:qutrit_ghz}.
To simulate a measurement on the given tableau we strictly follow the procedure given in \Cref{algo:chp}. The full code example can be found in the appendix.

Measurements can be done in any order, here we begin with the first qutrit $q_0$.

\definecolor{colpurple}{RGB}{200,180,230}
\definecolor{colpurple2}{RGB}{220,180,200}
\definecolor{lightred}{RGB}{245,210,210}

$$
\setlength{\arraycolsep}{6.5pt}
\begin{array}{cccccccc}
& q_0 & & & q_0&&&\\
&\downarrow & && \downarrow&&&
\end{array}
$$
$$
\renewcommand{\arraystretch}{1.2}
\begin{array}{@{} r @{}}
      d_0 \\
      d_1 \\
      d_2 \\
      \rightarrow s_0 \\
      s_1 \\
      s_2 
\end{array}
\left[
\renewcommand{\arraystretch}{1.2}
\begin{array}{>{\columncolor{colpurple!40}}c c c | >{\columncolor{colpurple!40}}c c c | c}
0 & 0 & 0 & 1 & 0 & 0 & 0 \\
0 & 1 & 1 & 0 & 0 & 0 & 0 \\
0 & 0 & 1 & 0 & 0 & 0 & 0 \\ 
\noalign{\vskip 2pt}\hline\noalign{\vskip 2pt}
\cellcolor{yellow!40} 2 & \cellcolor{lightred!80} 2 & \cellcolor{lightred!80} 2 & 0 & 0 & 0 & 0 \\
\cellcolor{colpurple2!80} 0 & \cellcolor{lightred!80} 0 & \cellcolor{lightred!80} 0 & 2 & 1 & 0 & 0 \\
\undermat{X\text{-Block}}{\cellcolor{colpurple2!80} 0 & \cellcolor{lightred!80} 0 & \cellcolor{lightred!80} 0} & 0 & 2 & 1 & 0
\end{array}
\right]
$$
\\
\\
To determine the measurement case, we scan the $X$-block of the stabilizer part and find a non-zero value for $q_0$ at $x_{3,0} = 1$ (highlighted in yellow). This indicates that we are in \textbf{Case 1}, corresponding to a random measurement outcome.
We normalize the selected row such that the pivot element $x_{3,0}$ becomes $1$ and then use this pivot row to eliminate other non-zero entries in the $q_0$ column of the stabilizer $X$-block. 

Next, we create a new stabilizer row for the measurement operator of the form $[000|100|r]$ where the phase entry is determined by the measurement outcome. Assume we measure $m_0 = 2$, then we set $r=2$. The stabilizer row $s_0$ is replaced by this measurement row $[000|100|\textcolor{red}{2}]$ and the associated destabilizer row $d_0$ is replaced by the pivot row $[111|000|0]$.
This yields the post-measurement tableau

$$
\renewcommand{\arraystretch}{1.2}
\begin{array}{@{} r @{}}
      \rightarrow d_0 \\
      d_1 \\
      d_2 \\
      \rightarrow s_0 \\
      s_1 \\
      s_2 
\end{array}
\left[
\renewcommand{\arraystretch}{1.2}
\begin{array}{c c c | c c c | c}
 \rowcolor{yellow!40} 1 & 1 & 1 & 0 & 0 & 0 & 0 \\
0 & 1 & 1 & 0 & 0 & 0 & 0 \\
0 & 0 & 1 & 0 & 0 & 0 & 0 \\ 
\noalign{\vskip 2pt}\hline\noalign{\vskip 2pt}
 \rowcolor{yellow!40} 0 & 0 & 0 & 1 & 0 & 0 & 2 \\
 0 & 0 & 0 & 2 & 1 & 0 & 0 \\
 0 &  0 & 0 & 0 & 2 & 1 & 0
\end{array}
\right]
.$$
\newline

\noindent We now proceed to measure the second qutrit $q_1$.

This time we cannot find any non-zero $X$-entry, so we are in \textbf{Case 2} and compute the deterministic measurement outcome.

$$
\quad
\setlength{\arraycolsep}{6.5pt}
\begin{array}{cccccccc}
\hspace{1em} & q_1  & && q_1&&\\
& \downarrow & && \downarrow&&
\end{array}
$$
$$
\renewcommand{\arraystretch}{1.2}
\begin{array}{@{} r @{}}
      \hspace{1em} d_0 \\
      d_1 \\
      d_2 \\
      s_0 \\
      s_1 \\
      s_2 
\end{array}
\left[
\renewcommand{\arraystretch}{1.2}
\begin{array}{c >{\columncolor{colpurple!40}}c c | c >{\columncolor{colpurple!40}}c c | c}
1 & 1 & 1 & 0 & 0 & 0 & 0 \\
0 & 1 & 1 & 0 & 0 & 0 & 0 \\
0 & 0 & 1 & 0 & 0 & 0 & 0 \\ 
\noalign{\vskip 2pt}\hline\noalign{\vskip 2pt}
 \cellcolor{lightred!80} 0 & \cellcolor{colpurple2!80} 0 & \cellcolor{lightred!80} 0 & 1 & 0 & 0 & 2 \\
\cellcolor{lightred!80} 0 & \cellcolor{colpurple2!80} 0 & \cellcolor{lightred!80} 0 & 2 & 1 & 0 & 0 \\
\undermat{X\text{-Block}}{\cellcolor{lightred!80} 0 & \cellcolor{colpurple2!80} 0 & \cellcolor{lightred!80} 0} & 0 & 2 & 1 & 0
\end{array}
\right]
$$
\\
\\
 Thus, we scan the $X$-block of the destabilizer half for non-zero entries.
$$
\renewcommand{\arraystretch}{1.2}
\begin{array}{@{} r @{}}
      \rightarrow d_0 \\
      \rightarrow d_1 \\
      d_2 \\
      s_0 \\
      s_1 \\
      s_2 
\end{array}
\left[
\renewcommand{\arraystretch}{1.2}
\begin{array}{c >{\columncolor{colpurple!40}}c c | c >{\columncolor{colpurple!40}}c c | c}
\cellcolor{lightred!80}  1 &  \cellcolor{yellow!40} 1 &  \cellcolor{lightred!80} 1 & 0 & 0 & 0 & 0 \\
\cellcolor{lightred!80}  0 &  \cellcolor{yellow!40} 1 & \cellcolor{lightred!80} 1 & 0 & 0 & 0 & 0 \\
\cellcolor{lightred!80}  0 & \cellcolor{colpurple2!90} 0 & \cellcolor{lightred!80}1 & 0 & 0 & 0 & 0 \\ 
\noalign{\vskip 2pt}\hline\noalign{\vskip 2pt}
 0 &  0 &  0 & 1 & 0 & 0 & 2 \\
 0 &  0 &  0 & 2 & 1 & 0 & 0 \\
0 &  0 & 0 & 0 & 2 & 1 & 0
\end{array}
\right]
$$
\\
\\
We create a temporary stabilizer row $s_t$ initialized to $[000|000|0]$ and add to it all  stabilizer rows that have a paired destabilizer row with non-zero $X$-value: $s_t \leftarrow s_t +  $\colorbox{orange!30}{$1\cdot s_0$}$ + $ \colorbox{green!30}{$1 \cdot s_1$}. The multiplicative coefficient for each stabilizer row is determined by the $X$-value in the destabilizer.
$$
\renewcommand{\arraystretch}{1.2}
\begin{array}{@{} r @{}}
       d_0 \\
       d_1 \\
      d_2 \\
      \rightarrow s_0 \\
      \rightarrow s_1 \\
      s_2 
\end{array}
\left[
\renewcommand{\arraystretch}{1.2}
\begin{array}{c c c | c c c | c}
  1 &  \cellcolor{orange!30}1 &   1 & 0 & 0 & 0 & 0 \\
  0 &  \cellcolor{green!30}1 &  1 & 0 & 0 & 0 & 0 \\
  0 &  0 & 1 & 0 & 0 & 0 & 0 \\ 
\noalign{\vskip 2pt}\hline\noalign{\vskip 2pt}
 \rowcolor{orange!30} 0 &  0 &  0 & 1 & 0 & 0 & 2 \\
 \rowcolor{green!30} 0 &  0 &  0 & 2 & 1 & 0 & 0 \\
0 &  0 & 0 & 0 & 2 & 1 & 0
\end{array}
\right]
$$
In this case, the final $s_t$ has a phase value of $2$, such that we can derive that the measurement outcome is $m_1=2$.
Because the outcome is deterministic, it does not need to be updated as it does not affect subsequent measurements. 
Finally, we measure the third qutrit $q_2$ which is again deterministic. We find that $x_{0,2}, x_{1,2}, x_{2,2} > 1$, so we combine rows $s_0, s_1, s_2$ to obtain the measurement row $s_t \leftarrow 1\cdot s_0 + 1 \cdot s_1 + 1 \cdot s_2 = [000|000|2]$, again implying a measurement result of $m_2=2$.

The final post-measurement tableau is 
$$
\renewcommand{\arraystretch}{1.2}
\begin{array}{@{} r @{}}
      d_0 \\
      d_1 \\
      d_2 \\
      s_0 \\
      s_1 \\
      s_2 
\end{array}
\left[
\renewcommand{\arraystretch}{1.2}
\begin{array}{c c c | c c c | c}
  1 &  1 &   1 & 0 & 0 & 0 & 0 \\
  0 &  1 &  1 & 0 & 0 & 0 & 0 \\
  0 &  0 & 1 & 0 & 0 & 0 & 0 \\ 
\noalign{\vskip 2pt}\hline\noalign{\vskip 2pt}
 0 &  0 &  0 & 1 & 0 & 0 & 2 \\
 0 &  0 &  0 & 2 & 1 & 0 & 0 \\
0 &  0 & 0 & 0 & 2 & 1 & 0
\end{array}
\right]
$$
and the overall measurement outcome is $\ket{222}$.
\end{example}

\subsubsection{Affine Subspace Sampling}
\label{subsub:affine}
The extended CHP algorithm is the standard procedure for sampling from a stabilizer tableau. However, whenever we need many shots from the same tableau, we can improve the sampling complexity by a factor $n$ by exploiting the structure of stabilizer measurement outcomes.
All possible measurement results over a given stabilizer tableau form an affine subspace over $\mathbb{Z}_d$ \cite{gross2006hudson}:
$$
\mathcal{A} = v_0 + \mathcal{S}, \ \ \mathcal{S} = \mathrm{rowspan}_{\mathbb{Z}_d}(R), \ \ R = \mathrm{rref}_{\mathbb{Z}_d}(X_\mathcal{S}).
$$
Here, $\mathrm{rowspan}_{\mathbb{Z}_d}(R)$ denotes the set of all $\mathbb{Z}_d$-linear combinations of the rows of the matrix $R$, and $\mathrm{rref}_{\mathbb{Z}_d}(X_\mathcal{S})$ is the reduced row echelon form over $\mathbb{Z}_d$, which produces a canonical basis for the row space. This construction allows exact sampling of measurement outcomes by generating vectors in the affine space $\mathcal{A}$.

The affine subspace is fully characterized by a single offset vector $v_0$, which corresponds to one valid measurement outcome, and the linear subspace of the $X$-block in the stabilizer tableau denoted by $\mathcal{S}$. Once the offset vector $v_0$ and a basis for $\mathcal{S}$ are known, we can generate additional measurement outcomes without re-running the whole CHP procedure. To sample a new measurement outcome, we first identify a basis for the row space of the measurement tableau, denoted as $B$. Let $R$ be the reduced row echelon form of the relevant $X$-block, and let $k = \mathrm{rank}(R)$ be the number of independent rows in this basis. We then draw a random coefficient vector $\mathbf{c} \in \mathbb{Z}_d^k$ and compute the sampled outcome as
\[
v = v_0 + B^T \mathbf{c} \pmod{d}.
\]
Here, the dimensionality $k$ corresponds to the number of independent generators of the affine subspace $\mathcal{S} = \mathrm{rowspan}_{\mathbb{Z}_d}(R)$.

This method produces uniformly distributed outcomes over the affine subspace given by the tableau and reduces the per-shot complexity to $\mathcal{O}(k\cdot n)$.

\begin{continueexample}{ex2}
Previously we have computed an initial valid measurement outcome for the qutrit GHZ state as the offset vector $v_0 = \{2,2,2\}$. To determine the full set of  possible outcomes, we define the affine measurement subspace.
First, we define the basis for the measurement space by calculating the row space of the $X$-block from the original stabilizer tableau. With the reduced row echelon form (rref) we get the independent generators:
$$
\text{rowspan}\!\left(\text{rref}\!\left(\begin{bmatrix}
    2 & 2 & 2\\
    0 & 0 & 0 \\
    0 & 0 & 0
\end{bmatrix}\
\right)\!\right) = \mathcal{S} =  \{1,1,1\}.
$$
Any valid measurement must differ from the offset vector only by a linear combination of the vector $s_0 = \{1,1,1\}$.
Every possible measurement outcome can then be expressed as $v = v_0 + c_is_0$ where $c_i \in\{0,1,2\}$.
By iterating through the coefficients, we get the entire subspace as
\begin{align*}
c &= 0\!:\, v = \{2,2,2\} + 0\cdot \{1,1,1\} = \{2,2,2\} \!\!\!\pmod{3}\\
c &= 1\!:\, v = \{2,2,2\} + 1\cdot \{1,1,1\} = \{0,0,0\} \!\!\!\pmod{3}\\
c &= 2\!:\, v = \{2,2,2\} + 2\cdot \{1,1,1\} = \{1,1,1\} \!\!\!\pmod{3}.
\end{align*}
Hence, all possible measurement results for the qutrit GHZ state are $\{0,0,0\},\{1,1,1\},\{2,2,2\}$.
\end{continueexample}

\subsection{Reduced Tableaus}
A measurement introduces new stabilizer generators and updates all other generators affected by it. After such updates, the tableau may contain linearly dependent stabilizers, trivial generators or rows that no longer contribute to the description of the state. Reduced tableaus address this by systematically eliminating redundant or unnecessary generators to restore a minimal and canonical generator set. This reduction preserves the encoded quantum state while decreasing the tableau size. This is equal to computing the post-measurement state and excluding the qudits that have been measured already.

The general idea is to perform a standard measurement (\Cref{sub:measurement}) and then reduce the resulting tableau by isolating all information about the measured qudit(s) and subsequently remove them from the tableau. This ensures that the post-measurement tableau remains minimal and consistent.
\paragraph{Case 1: Measurement with random outcome.}
If the measurement outcome is random, at least one of the stabilizer generators anticommutes with the measurement operator. Such a generator is then designated as the \emph{measured stabilizer} and replaced by a trivial generator representing the measurement,
$$
g_m = \tau^p \big( I^{\otimes i} \otimes Z \otimes I^{\otimes (n-i-1)} \big).
$$
Here $\tau^p$ with $p \in \mathbb{Z}_{2d}$ encodes the measurement outcome, and the operator only has a single $Z$ on the measured qudit $i$.
In the next step, we want to remove the just measured qudit $i$. For this, we need to incorporate the information contained in the phase of $g_m$ into the remaining stabilizers. 
For any generator $g_j$ where $z_{ji}>0$, we multiply the generators (which is equal to addition in the tableau formalism) $g_j \leftarrow g_j \cdot g_m^c $ where $c=-z_{ji}$ such that $z_{ji} =0$. This step ensures that every stabilizer other than $g_m$ acts trivially on qudit $i$. Moreover, it propagates the correct phase information to the other generators. We don't need to update the destabilizers in this case.
Once all $z_{ji}$ entries have been eliminated, the measurement row $g_m$ and its corresponding destabilizer can be safely removed together with the $\mathbf x_i$ and $\mathbf z_i$ columns. The resulting tableau describes the post-measurement state on $n\!-\!1$ qudits with a size of $2(n\!-\!1) \times (2n\!-\!1)$.

\paragraph{Case 2: Measurement with deterministic outcome.}
For a deterministic measurement, we can directly compute the measurement outcome via a linear combination of the stabilizer generators, such that we obtain a temporary  stabilizer generator 
$$g_m = \tau^p  \big( I^{\otimes i} \otimes Z \otimes I^{\otimes (n-i-1)} \big).$$
This temporary row does not become part of the tableau, but it represents the constraint we need to add to the remaining generators. Similar to the random case, the temporary generator $g_m$ is used to update the other stabilizers such that they act nontrivially on qudit $i$.  With this we ensure that the remaining generators contain the correct phase information. In the deterministic case we also need to explicitly update the destabilizers.
Unlike the random case, no stabilizer is explicitly replaced during measurement. However, some generators become redundant and the remaining rows need to be updated. To detect and remove these, we perform Gaussian elimination modulo $d$ to identify linear dependencies.
Whenever we eliminate a stabilizer row $s_t \leftarrow s_t\cdot s_p^{-\lambda}$ (using pivot row $s_p$), where $\lambda\in\mathbb{Z}_d$ is the elimination coefficient, we apply the dual update to destabilizers $d_p \leftarrow d_p\cdot d_t^{\lambda}$ to preserve commutation relations.
Any stabilizer that becomes a linear combination of the others is discarded, along with its corresponding destabilizer. Finally, the $\mathbf x_i$ and $\mathbf z_i$ columns are removed, producing a reduced tableau of size $2(n\!-\!1) \times (2n\!-\!1)$ for the remaining $n\!-\!1$ qudits representing the post-measurement state.
\newline
\begin{example}
To show the application of the reduced tableau algorithm, we look at a random tableau with $n=3, d=3$

$$
\left[
\begin{array}{ccc|ccc|c}
0 & 0 & 0 & 0 & 2 & 0 & 2 \\
2 & 0 & 2 & 1 & 0 & 2 & 0 \\
0 & 0 & 1 & 0 & 2 & 1 & 4 \\ 
\noalign{\vskip 2pt}\hline\noalign{\vskip 2pt}
1 & 1 & 1 & 0 & 0 & 0 & 0 \\
0 & 0 & 0 & 2 & 1 & 0 & 2 \\
0 & 0 & 0 & 2 & 0 & 1 & 0
\end{array}
\right].
$$
We perform a normal random measurement of qutrit $q_2$. Assume this measures row $s_0$ with the outcome $m_2 = 1$ and the post-measurement tableau

$$
\left[
\begin{array}{ccc|ccc|c}
1 & 1 & 1 & 0 & 0 & 0 & 0 \\
0 & 1 & 0 & 1 & 0 & 2 & 0 \\
2 & 2 & 0 & 0 & 2 & 1 & 4 \\ 
\noalign{\vskip 2pt}\hline\noalign{\vskip 2pt}
 \rowcolor{yellow!40} 0 & 0 & 0 & 0 & 0 & 1 & 4 \\
0 & 0 & 0 & 2 & 1 & 0 & 2 \\
0 & 0 & 0 & 2 & 0 & 1 & 0
\end{array}
\right].
$$

\noindent For the tableau reduction we now need to eliminate all other $Z$-entries in the $q_2$ column. We do this by multiplying $s_2 \leftarrow s_2 \cdot s_0^2= [0,0,0,2,0,1,0] +[0,0,0,0,0,1,4] +[0,0,0,0,0,1,4]= [0,0,0,2,0,0,2]$.

$$
\left[
\begin{array}{ccc|ccc|c}
1 & 1 & 1 & 0 & 0 & 0 & 0 \\
0 & 1 & 0 & 1 & 0 & 2 & 0 \\
2 & 2 & 0 & 0 & 2 & 1 & 4 \\ 
\noalign{\vskip 2pt}\hline\noalign{\vskip 2pt}
0 & 0 & 0 & 0 & 0 & 1 & 4 \\
0 & 0 & 0 & 2 & 1 & 0 & 2 \\
 \rowcolor{yellow!40} 0 & 0 & 0 & 2 & 0 & 0 & 2
\end{array}
\right]
$$
Then we just need to remove the measured stabilizer and destabilizer row together with the columns for $q_2$ and get
$$
\left[
\begin{array}{cc|cc|c}
0 & 1  & 1 & 0  & 0 \\
2 & 2  & 0 & 2  & 4 \\ 
\noalign{\vskip 2pt}\hline\noalign{\vskip 2pt}
0 & 0  & 2 & 1  & 2 \\
0 & 0  & 2 & 0  & 2
\end{array}
\right].
$$
We continue with a deterministic measurement of $q_1$ with value $m_1 = 0$. The CHP algorithm gives us a temporary row here with $t = [0,0,0,1,0]$. Again, we eliminate all other $Z$-entries in the $q_1$ column with $s_1 \leftarrow s_1 \cdot t^2=[0,0,2,1,2] + 2 \cdot [0,0,0,1,0]=[0,0,2,0,2]$ to get
$$
\left[
\begin{array}{cc|cc|c}
0 & 1  & 1 & 0  & 0 \\
2 & 2  & 0 & 2  & 4 \\ 
\noalign{\vskip 2pt}\hline\noalign{\vskip 2pt}
 \rowcolor{yellow!40} 0 & 0  & 2 & 0  & 2 \\
0 & 0  & 2 & 0  & 2
\end{array}
\right].
$$
Then we compute the rref of the stabilizer part to identify the redundant row, which is obvious in this case ($s_2$). We update the paired destabilizer with $d_1 \leftarrow d_1 \cdot d_2 = [0,1,1,0,0] + [2,2,0,2,4] = [2,0,1,2,2]$. Phase is updated as $r\leftarrow r_1+r_2+2(z_1\cdot x_2)\pmod{2d} = 0 + 4 + 4 = 8 \equiv 2 \pmod{6}$, and we obtain
$$
\left[
\begin{array}{cc|cc|c}
 \rowcolor{yellow!40} 2 & 0  & 1 & 2  & 2 \\
2 & 2  & 0 & 2  & 4 \\ 
\noalign{\vskip 2pt}\hline\noalign{\vskip 2pt}
0 & 0  & 2 & 0  & 2 \\
0 & 0  & 2 & 0  & 2
\end{array}
\right].
$$
Now we can remove the redundant stabilizer and destabilizer row ($s_2$ and $d_2$) and also remove the columns for $q_1$ and get the final reduced tableau

$$
\left[
\begin{array}{c|c|c}
2   & 1   & 2 \\
\noalign{\vskip 2pt}\hline\noalign{\vskip 2pt}
0   & 2   & 2 \\
\end{array}
\right].
$$
The stabilizer for the final qutrit is $\tau^2 Z^2$ and is a deterministic $1$.
\end{example}

\paragraph{Application}
Reduced tableaus are particularly useful whenever we want to analyze quantum states under a sequence of measurements, especially in scenarios where qudits are gradually measured and removed from the computation. One key application is \emph{measurement-based quantum computing} (MBQC) \cite{article}, where computation is driven entirely through single-qudit measurements performed on an initially prepared entangled state (e.g.,\ \emph{cluster states}) \cite{PhysRevLett.86.5188}. When a qudit has been measured, the outcome affects the remaining qudits and the computation continues on a strictly smaller subsystem. Full MBQC requires single-qudit rotations as well, which are not Clifford and hence cannot be simulated with stabilizer. However, the reduced tableau offers an efficient way to keep track of an evolving and shrinking subsystem. 

\subsection{Complexity Analysis}
The stabilizer formalism enables efficient classical simulation of Clifford circuits by reducing quantum state evolution to algebraic update rules of a tableau over $\mathbb{Z}_d$. Each Clifford gate corresponds to a linear or affine transformation on the tableau which gives a computational cost proportional to the number of gates $G$ and the number of qudits $n$. Measurement introduces additional overhead: we need to identify the commuting and anticommuting structure among the stabilizers, performing modular Gaussian elimination on a subset of rows and potentially inserting or removing generators. 

In prime dimensions $d$, the measurement of a single qudit runs in $\mathcal{O}(n^2\,\mathrm{polylog}(d))$, where the $\mathrm{polylog}(d)$ factor arises from modular inverse and arithmetic operations on elements of $\mathbb{Z}_d$. Since the qudit dimension $d$ is fixed and typically small in practical applications, we treat these factors as constant and omit them in all subsequent asymptotic complexity expressions for tableau operations. In contrast, Smith normal form computations use exact integer arithmetic, where intermediate values may grow in bit-length. Therefore, we explicitly retain the dependence on $\log d$. Measuring all $n$ qudits with CHP hence has a worst case runtime of $\mathcal{O}(n^3)$, but we only need this once during preprocessing. On average, the procedure can only be marginally accelerated if the tableau is sparse or by employing optimized modular arithmetic for prime dimensions. All further measurement shots require only $\mathcal{O}(k n)$ per shot, giving a total runtime of $\mathcal{O}(S k n)$ for $S$ shots, where $k=\mathrm{rank}(X_S)$ as described in the affine subspace sampling in \Cref{subsub:affine}.

In composite dimensions, measurement sampling is performed using Smith normal form decompositions (\Cref{algo:snf}). Let $T_{\mathrm{SNF}}(n,\log d)$ denote the total runtime spent in SNF decompositions on $n \times n$ integer matrices with entries bounded by $d$ using SymPy's exact arithmetic implementation. Our sampling procedure performs two such SNF decompositions and additional tableau processing requiring $\mathcal{O}(n^3)$ operations to construct the diagonal constraint matrix by combining up to $n$ stabilizer generators across $n$ tableau rows. Thus, the preprocessing cost is $T_{\mathrm{SNF}}(n,\log d) + \mathcal{O}(n^3)$. SymPy uses an exact integer algorithm to compute the SNF. While no explicit asymptotic bound is stated for this implementation, SNF admits polynomial-time algorithms; for example, Kannan-Bachem give an algorithm whose runtime is polynomial in the matrix dimension and the bit-length of its entries (i.e., polynomial in $n$ and $\log d$)~\cite{kannanBachem1979}. After preprocessing, each additional measurement shot requires generating a sample vector and performing a matrix-vector multiplication, giving $\mathcal{O}(n^2)$ per shot. Thus, sampling $S$ shots in composite dimensions requires $T_{\mathrm{SNF}}(n,\log d) + \mathcal{O}(n^3) + \mathcal{O}(S n^2)$.
The full comparison is given in \Cref{tab:runtime}.

\begin{table*}[t!]
\centering
\caption{Comparison of complexities for operations in qudit stabilizer simulation. 
Here, $n$ is the number of qudits, $d$ is the qudit dimension (assumed prime unless noted otherwise), $G$ the number of Clifford gates, and $S$ the number of independent measurement shots in the computational basis.}
\label{tab:runtime}
\vspace{0.5em}
\begin{tabular}{ll}
\toprule
\textbf{Operation} 
& \textbf{Computational Complexity}  \\ 
\midrule

$G$ Clifford updates (all $d$)
& $\mathcal{O}(G\, n)$  \\

Single-Qudit Measurement (Random case) 
& $\mathcal{O}(n^2)$  \\

Single-Qudit Measurement (Deterministic case) 
& $\mathcal{O}(n^2)$  \\

$n$-Qudit Measurement
& $\mathcal{O}(n^3)$ \\

Tableau Reduction (random case) 
& $\mathcal{O}(n^2)$ \\

Tableau Reduction (deterministic case) 
& $\mathcal{O}(n^3)$ \\

Complete Simulation of $G$ Gates and $S$ Measurements 
& $\mathcal{O}(G \, n) + \mathcal{O}(n^3) + \mathcal{O}(S \,k\,n)$ 
\\

Composite-$d$ $n$-Qudit Measurement Sampling
& $T_{\mathrm{SNF}}(n,\log d) + \mathcal{O}(n^3) + \mathcal{O}(S n^2)$ \\
\bottomrule
\end{tabular}
\label{tab:simulation_complexity}
\end{table*}

\section{Code}
\label{sec:code}
\texttt{QuickQudits} is a Python library centered on a stabilizer tableau simulator for qudit Clifford circuits, providing tools for circuit construction and sampling of measurement outcomes.

In addition to the tableau-based simulator, the library includes conventional statevector and density-matrix backends for qudits of arbitrary dimension. They are intended as reference implementation for small instances and are subject to the usual exponential memory limitations.

This section gives a high-level overview of the structure and functionality of the library. Concrete usage examples and visualizations are provided in the appendix. The complete source code, documentation, tests, and illustrative Jupyter notebooks are available at \url{https://github.com/QUICK-JKU/QuickQudits} \cite{zenodo}. All experiments and examples in this work correspond to the release \textbf{v1.0.0}.

\subsection{Circuits and Simulation Backends}

Quantum circuits are represented by the \texttt{QuantumCircuit} class.
A circuit specifies the number of qudits $n$, the qudit dimension $d$, and an ordered sequence of operations, including Clifford gates, measurements and noise channels. Circuits are backend-agnostic: the same circuit object can be executed with different simulation strategies without modification.

The \texttt{QuantumCircuit} class provides various methods for circuit construction and manipulation. Elementary operations include appending gates and noise channels, inserting or deleting operations at arbitrary positions, replacing existing operations, copying circuits, composing and reversing them. Circuits can also be saved to and loaded from disk in the \texttt{.npz} format.

The library also includes a customizable circuit visualization utility. This tool renders circuits with gate symbols, measurement locations, and noise channels. Circuits can be visualized by calling the \texttt{draw} method on a circuit instance. \Cref{fig:placeholder} above depicts two such visualizations.

Circuit execution evolves a backend-specific representation of the quantum state. For stabilizer simulation, the state is represented by a tableau that encodes the generators of the stabilizer group together with the associated destabilizers. In the dense backends, the state is represented explicitly as a statevector or density matrix. The simulator supports both single-shot execution and batched sampling over many shots. In the following sections we focus on the stabilizer tableau backend.

\subsection{Stabilizer Tableau Simulator}

In full mode, stabilizer states on $n$ qudits are represented by a tableau of size $2n \times (2n+1)$ consisting of $2n$ Pauli generators (destabilizers first and then stabilizers). Measurement is possible only in full mode; in non-full mode only the stabilizers are stored. Each row in the tableau stores integer-valued exponents of $X$ and $Z$ together with a phase exponent $\tau$. Phase exponents are stored modulo $2d$, while $X$ and $Z$ exponents are stored modulo $d$.

The simulator provides two tableau implementations. For prime dimension $d$, the \texttt{Tableau} class supports stabilizer evolution, including post-measurement updates and reduced tableaus. For composite $d$, the \texttt{CompositeTableau} class uses the same interface, but does not construct post-measurement or reduced tableaus and only supports direct sampling of measurement outcomes. Both classes share a common API, allowing circuits to be executed independently of whether $d$ is prime or composite.

Circuit execution applies an instance of the \texttt{QuantumCircuit} class to a tableau with the tableau's \texttt{apply\_circuit} method, which sequentially applies Clifford update rules to the tableau. Tableau updates admit \emph{Numba JIT} \cite{10.1145/2833157.2833162} acceleration by default, while a pure \emph{NumPy} \cite{harris2020array} execution path is also available.

After a circuit execution, measurement outcomes can be obtained by sampling from the tableau. The simulator supports single-qudit measurements, joint measurement of all qudits, and partial measurements of arbitrary subsets of qudits, all in single-shot and multi-shot settings. For prime dimensions, measurements update the tableau state and can return reduced tableaus when operating in full mode. For sufficiently small composite Hilbert space dimension $d^n$, we can return the full probability vectors for the measured qudits.
A fully evolved tableau may additionally be converted into a statevector by constructing the common $+1$ eigenstate of the stabilizer generators. This functionality is intended for validation and analysis on small systems.

Finally, the library provides inspection utilities for tableau states based on a textual representation. Tableaus can be rendered in block form, showing the $X$ and $Z$ exponents together with the phase exponent, and stabilizer generators can be printed  in operator form. Intermediate tableau states can also be printed after individual simulation steps during circuit execution. Tableau states can be serialized to and restored from disk using the \texttt{.npz} format.

\subsection{Noisy Circuits}

Noise in \texttt{QuickQudits} is represented using explicit noise channels placed inside a \texttt{QuantumCircuit} instance. Each noise channel references a noise model via an integer model identifier.
Noise models are stored in a global noise registry and define a normalized probability distribution over stochastic error operations. Identical noise models are stored only once and reused across channels.
Noise channels can be appended, inserted, replaced, removed, or cleared. The \texttt{QuantumCircuit} interface supports adding noise at specific circuit locations, inserting noise layers at a given depth, applying noise channels after each gate, shifting all noise to the start or end, and explicitly realizing noise by sampling stochastic errors into concrete Weyl operations.

In the density-matrix backend, noise models additionally define Kraus operators, which are cached after construction. During the simulation, these Kraus operators are applied directly to evolve the quantum state, providing an exact reference implementation of noisy dynamics for small systems.

In the tableau backend, noise can be handled in multiple ways. In the direct mode, noise channels are realized stochastically per shot by sampling Weyl errors and applying them to the tableau. This mode is primarily used for validation. More efficient approaches include noise pushing and Pauli-frame simulation \cite{aaronsonImprovedSimulationStabilizer2004} as described in \Cref{sec:noise}, which provide fast multi-shot sampling with lightweight updates of the tableau phase information.

Noise channels are integrated into the circuit visualization utility. They are rendered as coloured circular shapes with wavy boundaries, where the opacity encodes the associated error probability. An example can be found in the appendix.

Finally, the library provides routines for estimating circuit fidelity with the tableau backend. Both entanglement fidelity and state fidelity can be computed. For verification on small instances, the same fidelities can be cross-checked using density-matrix simulation by executing a composed circuit implementing reference-state preparation, the noisy circuit, the ideal inverse circuit, and the inverse preparation.

\subsection{Utilities, Tests, and Examples}

The library includes a set of modules supporting utility tasks such as random circuit generation, state initialization, probability post-processing, and data handling.

The repository further contains an extensive set of unit tests and a collection of Jupyter notebooks which serve as tutorials for demonstrating typical usage patterns. 

\section{Noisy Simulation}
\label{sec:noise}
Real-world quantum architectures can never be ideal and will inevitably be affected by noise. In the context of Clifford circuits, certain classes of  noise can be effectively modeled by using probabilistic mixtures of Weyl operators. We are able to simulate a variety of noise channels including depolarizing, dephasing, and correlated multi-qudit noise, all of which can be integrated via stochastic simulation \cite{grurl2021stochastic, berquist2022stochastic}. This method maintains the efficiency of stabilizer simulation and allows classical sampling over discrete error distributions. Each error instance then corresponds to the random application of a Weyl operator either prior to, or after a Clifford gate or measurement step. The tableau formalism proves to be a powerful tool in this context, because Weyl errors solely affect the phase column, thus noise can be applied with minimal computational overhead. This section reviews the theoretical foundations of qudit noise channels \cite{dutta2023qudit}, their effects on stabilizers \cite{aigner2025qudit}, and discusses their integration into full circuit simulation.

Let $q_{a,b}$ denote the probability weight of each single qudit Weyl operator with total error probability $p = \sum_{(a,b) \neq (0,0)}q_{a,b}$. Then the completely positive, trace preserving error map acting on a density matrix $\rho$ is
$$
\mathcal{E}(\rho) = (1-p)\rho + \!\sum_{(a,b) \neq (0,0)} \!q_{a,b}W(a,b) \rho W(a,b)^\dagger.
$$
For uniform weights, i.e.\ $q_{a,b}=\frac{p}{d^2-1}$, this reduces to \textit{depolarizing noise}:
\begin{align*}
\mathcal{E}_{\text{dep}}(\rho) 
=& (1-p)\rho + \frac{p}{d^2-1} \!\sum_{(a,b) \neq (0,0)} \!W(a,b) \rho W(a,b)^\dagger \\
=& \left(1 - \frac{d^2}{d^2-1} p\right) \rho + \frac{d}{d^2-1} p \,\frac{\mathbb{I}}{d}.
\end{align*}
This channel drives any state towards the maximally mixed state. In the simulation within the stabilizer formalism, this is done by randomly choosing one of the $d^2 - 1$ non-identity Weyl operators and applying it as a random Weyl error.
Similarly, we can also simulate \textit{dephasing noise}:
$$
\mathcal{E}_{\text{deph}}(\rho) = (1-p)\rho + \frac{p}{d-1} \sum_{b=1}^{d-1} Z^b \rho (Z^b)^\dagger
.$$
This channel preserves diagonal elements of $\rho$ while attenuating off-diagonal entries by a factor $(1-p)$.
Likewise, one can define a generalized dit-flip error for qudits. With probability $p$ it applies a nontrivial shift $X^a, (a=0, \dots, d-1)$ with uniform probability. The corresponding map is
$$\mathcal{E}_{\text{flip}}(\rho) = (1-p)\rho + \frac{p}{d-1} \sum_{a=1}^{d-1} X^a \rho (X^a)^\dagger.$$
This channel permutes the computational basis state in a way that randomizes which state each qudit occupies and preserves the overall population distribution.

Each of the channels above can be incorporated into stochastic Monte Carlo simulations by randomly sampling and applying one of the relevant Weyl operators at each noise location. In the tableau formalism, updating a single Weyl operator is simple because it only affects the phase column of the tableau and leaves stabilizers and destabilizers unchanged. The main source of efficiency comes from \emph{noise pushing}: rather than physically propagating each noise gate through the circuit, the simulator computes the final phase that would result if all noise were pushed to the end. This allows us to capture the combined effect of all Weyl errors in a single tableau phase update, avoiding repeated tableau modifications and enabling fast Monte Carlo sampling.

For composite noise channels or correlated noise models, the Kraus operators can still be decomposed into linear combinations of Weyl operators. We then sample correlated error patterns across multiple different qudits. This setting allows the modeling of several realistic multi-qudit errors such as correlated phase damping or stochastic Weyl errors that change the population of computational basis states, providing a stabilizer-compatible approximation to amplitude damping. We can use Weyl twirling -- a generalization of Pauli twirling --  to transform arbitrary noise into a noise model compatible with stabilizer simulation. In the qubit setting, this is formalized as randomized compiling \cite{PhysRevA.94.052325}, and in upcoming work, we extend this approach to qudits of arbitrary dimension. It maintains the same average fidelity and depolarizing strength as the original error channel but discards non-Pauli components and therefore allows the use of efficient stochastic stabilizer simulation for general physical noise.

\subsection{Noise in Tableaus}
As described before, the simulator supports arbitrary noise models expressed as probabilistic mixtures of Weyl operators. These have simple update rules that only affect the phase column of the stabilizer tableau and leave the $X$- and $Z$-exponent blocks unchanged. This applies to all Pauli and Weyl noise models. We support two noise-simulation methods: Pauli frames and noise pushing.

\subsubsection{Pauli Frames}
Pauli-frame simulation is a standard technique for efficiently handling stochastic Pauli or Weyl noise in Clifford circuits and has been developed in \cite{PhysRevA.99.062337}. Instead of explicitly simulating the noisy circuit, we maintain Pauli frames that store the cumulative effect of the noise. First, the noiseless stabilizer tableau is propagated through the Clifford circuit to obtain the final tableau and a reference measurement outcome. Separately, Pauli frames $W_{\mathrm{pf}} = X^{\mathbf{a}} Z^{\mathbf{b}} = \bigotimes_{j=1}^n X_j^{a_j} Z_j^{b_j}$, where $\mathbf{a},\mathbf{b}\in\mathbb{Z}_d^n$, are initialized with $\mathbf{a}=\mathbf{0}$ and $\mathbf{b}$ sampled uniformly at random in $\mathbb{Z}_d^n$, and then updated by applying the action of each Clifford according to the conjugation rules. Whenever we encounter a noise gate, the sampled Weyl operator is multiplied into the corresponding frame (and not simulated in the tableau). Measurements are performed on the noiseless final tableau only once to obtain a reference outcome, and additional noisy samples are generated by adding the $\mathbf{a}$-vector (the $X$-component of the Pauli frame) modulo $d$ to this reference outcome. This approach avoids modifying the stabilizer generators in the tableau or repeating tableau measurements and allows fast sampling. The full procedure is described in \Cref{algo:pfs}.

\subsubsection{Noise pushing}
Noise pushing exploits the tableau structure. In particular, Weyl noise acts on the stabilizer generators solely through updates of the phase column, leaving the $X$ and $Z$ blocks unchanged.
Instead of tracking Pauli frames through the circuit, this approach can be viewed as propagating the effect of each sampled Weyl noise operator forward through the subsequent Clifford gates and accumulating their net phase contribution at the end of the circuit. Importantly, we do not explicitly construct the corresponding Weyl operators or their distribution, rather, we compute their combined effect on the tableau's phase. Concretely, we simulate the noiseless circuit once and store the intermediate $X,Z$ tableau data immediately before each noise channel. For a given noise realization, this approach produces a single phase contribution vector $\Delta \tau$.
A similar idea exists for graph states, where the noise and the noiseless state are updated independently, as presented in \cite{aigner2025qudit}. That work also shows how to apply the method to stabilizer states (Sec.~VI~A), while the graph-state version is presented for a closed set of update rules.\\
The idea of noise pushing is visualized in \Cref{fig:noise-pushing} and the method for obtaining the final noise contribution vector $\Delta \tau$ is described in \Cref{algo:push}.

Both strategies, Pauli-frame and noise-pushing, avoid repeated simulation of the reference stabilizer tableau. The noiseless circuit is simulated once, and each Monte Carlo shot reduces to propagating a Pauli frame or computing a single phase-update vector. Consequently, sampling from noisy Clifford circuits becomes significantly more efficient, while maintaining the exact distribution from the original noisy circuit.
\end{multicols}

\begin{algorithm}
\caption{Pauli-Frame Simulation for Noisy Clifford Circuits}
\label{algo:pfs}
\begin{enumerate}
\item Run the circuit once while skipping all noise channels to obtain a final noiseless tableau $T_{\mathrm{clean}}$.
\item Measure $T_{\mathrm{clean}}$ in the computational basis once to obtain a reference outcome vector $\mathbf{r}\in\mathbb{Z}_d^n$.
\item Initialize Pauli frames for $S$ shots. For each shot $s\in\{1,\dots,S\}$, initialize exponent vectors $\mathbf{a}^{(s)},\mathbf{b}^{(s)}\in\mathbb{Z}_d^n$ with $\mathbf{a}^{(s)}=\mathbf{0}$ and $\mathbf{b}^{(s)}$ sampled uniformly at random from $\mathbb{Z}_d^n$.
\item Iterate over the circuit operations in order and update the Pauli frames:
\begin{enumerate}
\item If the operation is a noise channel acting on qudit $i$, then for each shot $s$ sample exponents $(\alpha,\beta) \in \mathbb{Z}_d^2$ corresponding to the Weyl operator acting on qudit $i$, $X_i^{\alpha} Z_i^{\beta}$, and update only that qudit in the frame: $\mathbf{a}^{(s)}_i \gets \mathbf{a}^{(s)}_i + \alpha \pmod d$, \quad $\mathbf{b}^{(s)}_i \gets \mathbf{b}^{(s)}_i + \beta \pmod d$.
\item Otherwise (Clifford gate), conjugate each frame $(\mathbf{a}^{(s)},\mathbf{b}^{(s)})$ through the gate using the Clifford conjugation rules.
\end{enumerate}
\item For each shot $s$, output the noisy measurement sample
$\mathbf{m}^{(s)} \gets (\mathbf{r} + \mathbf{a}^{(s)}) \pmod d$.
\end{enumerate}
\end{algorithm}
\begin{figure*}
    \centering
    \begin{subfigure}[t]{0.5\textwidth}
        \centering
        \includegraphics[height=1.0in]{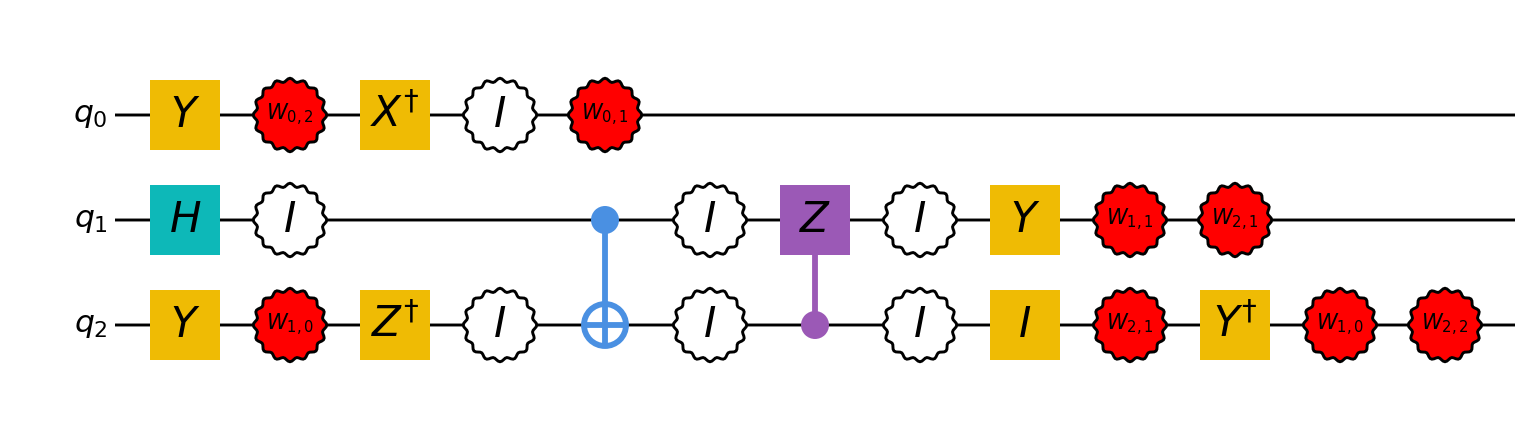}
        \caption{A random noise realization for a given circuit.}
    \end{subfigure}%
    ~ 
    \begin{subfigure}[t]{0.5\textwidth}
        \centering
        \includegraphics[height=1.0in]{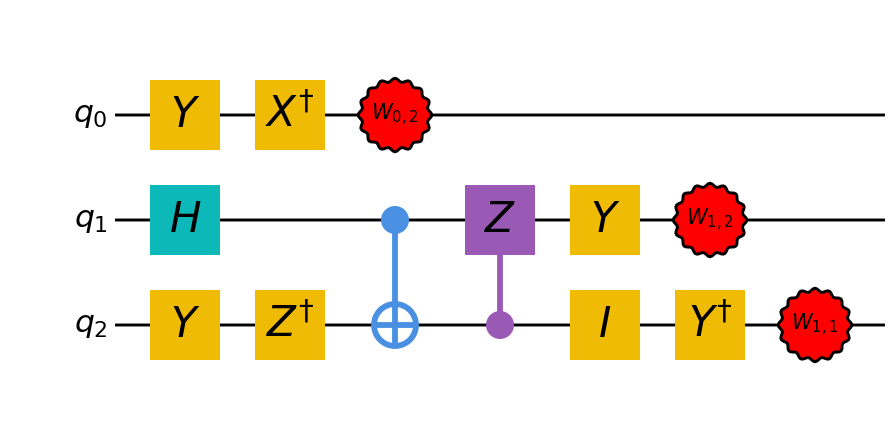}
        \caption{A single Weyl-noise layer at the end of the circuit. The effect is the same as in image (a).}
    \end{subfigure}
    \caption{The noise pushing approach takes a given noise realization and can be viewed as implicitly pushing it to the end of the circuit. We compute a corresponding phase update $\Delta\tau$ from the stored tableau columns and add it to the tableau. Importantly, note that we only compute the phase update and not the explicit final noise layer.}
    \label{fig:noise-pushing}
\end{figure*}
\begin{algorithm}[H]
\caption{Noise pushing in Qudit Clifford Circuits}
\label{algo:push}
\begin{enumerate}
\item Simulate the circuit once while skipping all noise channels (index them by $k=1,\dots,N$). During this simulation,  when the $k$-th noise channel acting on qudit $i_k$ is encountered, store the corresponding tableau columns $\mathbf{x}_{i_k}$ and $\mathbf{z}_{i_k}$ from the current noiseless tableau, and store the noise model (distribution over $(\alpha_k,\beta_k)\in\mathbb{Z}_d^2$) associated with this channel. After the simulation completes, denote the final tableau by $T_{\mathrm{clean}}$.
\item For each shot $s\in\{1,\dots,S\}$:
\begin{enumerate}
\item Initialize $\Delta\tau^{(s)} \gets \mathbf{0}$.
\item For each noise event $k\in\{1,\dots,N\}$:
\begin{enumerate}
\item Sample $(\alpha_k^{(s)},\beta_k^{(s)})\in\mathbb{Z}_d^2$ from the stored noise model of event $k$, corresponding to the local Weyl operator $X_{i_k}^{\alpha_k^{(s)}} Z_{i_k}^{\beta_k^{(s)}}$.
\item Compute the phase contribution using the Weyl update rule
$
\delta\tau_k^{(s)} = 2(\beta_k^{(s)}\,\mathbf{x}_{i_k} - \alpha_k^{(s)}\,\mathbf{z}_{i_k}) \pmod{2d}.
$
\item Update
$
\Delta\tau^{(s)} \gets \Delta\tau^{(s)} + \delta\tau_k^{(s)} \pmod{2d}.
$
\end{enumerate}
\item Form the noisy tableau by updating only the phase column of the clean final tableau:
$T_{\mathrm{noisy}}^{(s)} \gets T_{\mathrm{clean}}$ with phase $\tau \gets \tau + \Delta\tau^{(s)} \pmod{2d}$.
\item Obtain the measurement sample from $T_{\mathrm{noisy}}^{(s)}$.
\end{enumerate}
\end{enumerate}
\end{algorithm}

\begin{multicols}{2}

\subsection{Complexity of Noisy Simulation}
The noise-pushing strategy introduced here yields a significant reduction in simulation cost for noisy Clifford circuits, as it avoids propagating every noise instance. It can also be combined with Pauli-frame simulation for further efficiency in certain workflows.
\paragraph{Preprocessing cost.}
We perform the exact simulation of the noiseless circuit $U$ exactly once. During this simulation, immediately before each noise channel, we store the corresponding $X,Z$ tableau columns that are necessary to calculate the noise contribution later. The simulation follows the Clifford update rules described in \Cref{sec:simulation} and requires $\mathcal{O}(Gn)$ operations for $G$ gates and $n$ qudits, while storing the tableau columns for $N$ noise channels requires $\mathcal{O}(Nn)$ entries. No noise sampling has happened yet. In a Pauli-frame simulation, we likewise simulate the noiseless tableau once, but do not cache intermediate tableau columns, instead propagating the frame through the Clifford circuit separately for each shot.

\paragraph{Cost per noise realization.} 
We use the stored $X,Z$ columns together with a sampled sequence of Weyl operator exponents representing each noise realization to compute the corresponding phase update vector $\Delta\tau$. The update for a single noise operator combines the sampled components $(\alpha_k,\beta_k)$ with the stored tableau columns $(\mathbf{x}_k,\mathbf{z}_k)$ via $2(\beta_k\mathbf{x}_k - \alpha_k\mathbf{z}_k) \bmod 2d$. Summing over all $N$ noise operators gives the total phase update
$\Delta\tau = \sum_{k=1}^{N} 2(\beta_k\mathbf{x}_k - \alpha_k\mathbf{z}_k) \pmod{2d}$, which requires $\mathcal{O}(Nn)$ per shot, since each contribution is a length-$n$ vector. In Pauli-frame simulation, each shot requires propagating the frame through the Clifford circuit and applying the sampled noise operators. Each Clifford gate or noise operator updates only the affected qudit indices of the frame, giving $\mathcal{O}(G+N)$ operations, while initializing the frame requires $\mathcal{O}(n)$ operations. The total per-shot cost is therefore $\mathcal{O}(G+N+n)$.

\paragraph{Measurement and sampling cost.}
After computing the phase update vector $\Delta\tau$, the effect on stabilizer measurements is obtained by updating the phase column of the tableau, which requires $\mathcal{O}(n)$ operations per shot. For prime dimensions $d$, the resulting tableau is measured using the standard CHP measurement algorithm presented in \Cref{sub:measurement}, with runtime $\mathcal{O}(n^3)$ per shot. For composite dimensions, SNF measurement algorithm is used (\Cref{algo:snf}) with runtime $T_{\mathrm{SNF}}(n,\log d) + \mathcal{O}(n^3)$. In Pauli-frame simulation, the noiseless tableau is measured once during preprocessing, and each additional shot only requires adding the frame's $X$-exponent vector to the reference outcome, which adds $\mathcal{O}(n)$ to the runtime.

\paragraph{Total runtime.}
Let $T_{\mathrm{meas}}(n,d)$ denote the runtime required to measure a stabilizer tableau. As stated before, for prime dimensions $d$, $T_{\mathrm{meas}}(n,d)=\mathcal{O}(n^3)$, and for composite dimensions $T_{\mathrm{meas}}(n,d) = T_{\mathrm{SNF}}(n,\log d) + \mathcal{O}(n^3)$. The overall runtime complexity for $S$ Monte-Carlo shots in a circuit with $G$ gates, $N$ noise operators and $n$ qudits is then
$$
\mathcal{O}(Gn) + \mathcal{O}(S(Nn + n + T_{\mathrm{meas}}(n,d)))
$$
for the noise-pushing approach. Pauli-frame simulation requires the noiseless circuit simulation once, a single reference measurement, and then per-shot frame propagation with outcome generation, giving
$$
\mathcal{O}(Gn) + T_{\mathrm{meas}}(n,d) + \mathcal{O}(S(G+N+n)).
$$
Pauli frames, however, only support sampling and do not provide access to the post-measurement tableau or tableau reduction. A comparison is given in \Cref{fig:pf-np}.

\begin{figure*}[t!]
    \centering
    \begin{subfigure}[t]{0.49\textwidth}
        \centering   \includegraphics[width=0.9\textwidth]{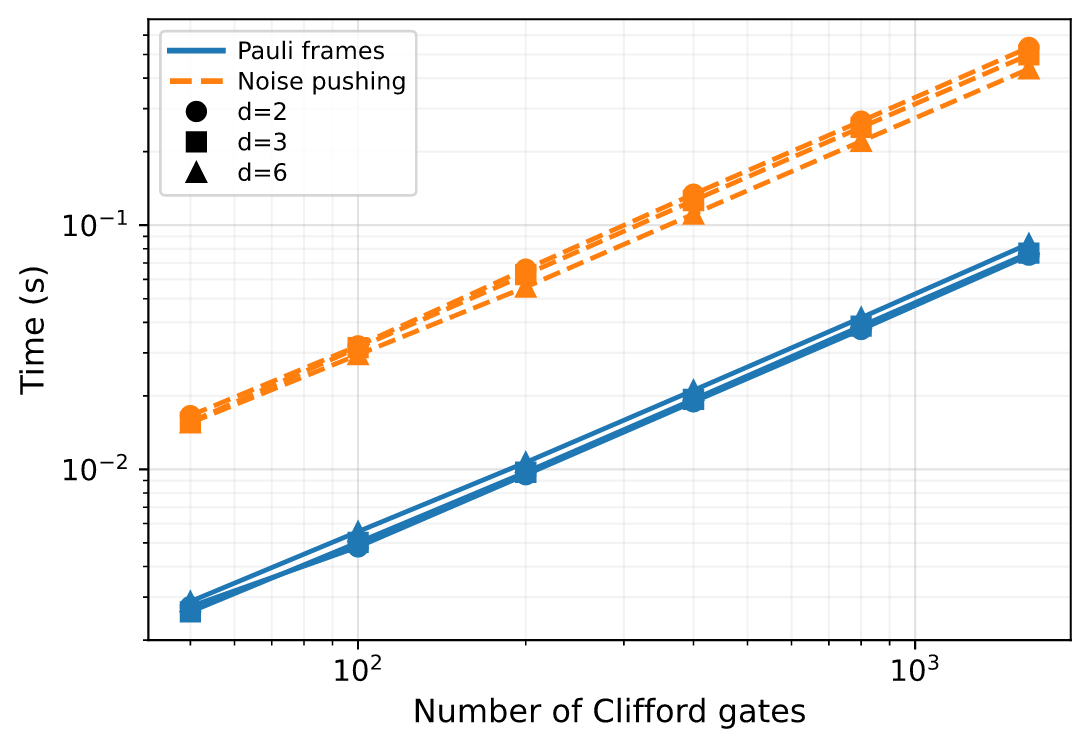}
        \caption{Runtime of Pauli-frame and noise-pushing simulation without measurement as a function of the number of Clifford gates, at fixed system size $n=24$.}
    \end{subfigure}%
    ~ 
    \begin{subfigure}[t]{0.49\textwidth}
        \centering     \includegraphics[width=0.9\textwidth]{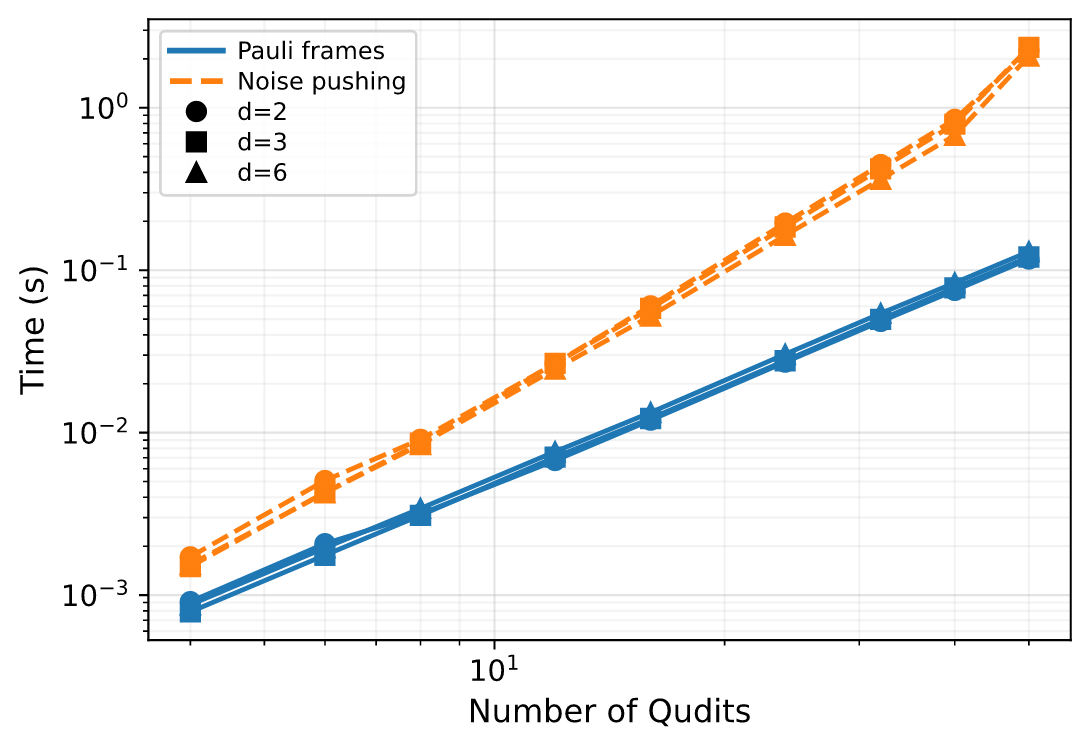}
        \caption{Runtime of Pauli-frame and noise-pushing simulation without measurement as a function of the number of qudits, with circuit size scaling as $G=n^2$.}
    \end{subfigure}
    \caption{Runtimes are averaged over $3$ runs of $10$ randomly generated Clifford circuits with $1000$ Monte Carlo shots each. After every Clifford gate, a single-qudit depolarizing noise channel is applied with error probability $p=0.5$. No measurement is performed, only the final phase-update vector $\Delta\tau$ for noise pushing and the final Pauli-frame exponent vectors $(\mathbf a,\mathbf b)$ for Pauli-frame simulation are computed.}
    \label{fig:pf-np}
\end{figure*}
\section{Application: Efficient Circuit Fidelity Estimation}
\label{sec:applications}
Circuit fidelity quantifies how close the actual noisy output state is to the ideal state produced by a perfect circuit execution. For a fixed circuit that prepares a pure state, this reduces to the state fidelity 
\(
F = \mathrm{tr}(\rho_{\mathrm{ideal}}\, \rho_{\mathrm{noisy}})
\).
A method that is experimentally motivated to estimate such fidelities is the use of \emph{mirror circuits} \cite{proctor2022measuring}. In this protocol, a circuit $U$ is concatenated with a quasi-inverse $\tilde U$ such that, in the absence of noise, the combined sequence returns the system to the input state again (or to a known Pauli-basis state). Additional layers of randomly chosen local Pauli/Weyl or Clifford gates are inserted before, between, and after the forward and reverse parts. The survival probability of the predicted output state then serves as an estimator for the average circuit fidelity.

In our simulation framework, this procedure can be modelled efficiently for Clifford circuits. For a given Clifford sequence, we simulate a noisy forward evolution followed by an ideal reverse evolution and compute the probability of returning to the expected Pauli-basis state. In the noiseless case, the combined circuit returns the system deterministically to that state. By sampling noise realizations in a Monte Carlo scheme and averaging the survival probability, we obtain an estimator for the circuit fidelity. We continue to describe the protocol using the computational zero state $\ket{0}^{\otimes n}$ for clarity, but the same framework can be used for the evaluation of entanglement fidelity by choosing an appropriate entangled input state (e.g., a maximally entangled state with a reference system). 

Moreover, we can systematically vary noise parameters and analyse how the survival probability changes across a parameter range. In the original mirror-circuit protocol, the reported fidelity is typically averaged over randomized Clifford layers and inserted Pauli layers; our simulator allows both fixed-circuit analysis and ensemble averaging over such randomized constructions.

\subsection{Mirror Circuit Simulation}

Given a Clifford circuit $C = C_0 C_1 \dots C_{m-1}$, the mirror-circuit protocol constructs a \emph{mirrored} sequence 
$$
U = C \tilde{C}^{-1},
$$
where $\tilde{C}^{-1}$ denotes a quasi-inverse of $C$. In the absence of noise, the combined circuit maps the computational zero state to a known Pauli-basis state,
$$
U \ket{0}^{\otimes n} = \ket{b}.
$$
This reference state $\ket{b}$ can be determined efficiently by simulating the noiseless circuit once.

On noisy hardware the implemented circuit prepares a state $\rho$, and the protocol estimates the survival probability
$$
p = \langle b | \rho | b \rangle,
$$
i.e., the probability of observing the predicted Pauli-basis outcome. In practice, this probability is estimated by repeatedly executing the same circuit and measuring the fraction of shots returning the expected outcome.

For a single circuit instance, the rescaled survival probability $S$ can be interpreted as the polarization of an effective depolarizing channel. After averaging over the randomized layers used in the mirror-circuit construction, $S$ exhibits an exponential decay with circuit depth. Fitting this decay yields a parameter that provides an estimate of the average entanglement infidelity of the implemented circuit \cite{proctor2022measuring,proctor2022scalable}.

A direct stabilizer-tableau simulation requires two full circuit simulations per noise realization: (1) a forward simulation of the noisy circuit and (2) a backward simulation of the ideal inverse. Even though each run is efficient in theory, this quickly becomes practically challenging for large sequence lengths and many Monte-Carlo samples.

In our simulator, this protocol can be evaluated efficiently for Clifford circuits. We first determine the ideal output state $\ket{b}$ by simulating the noiseless mirrored circuit. Subsequently, Monte Carlo sampling of noise realizations allows us to estimate the survival probability and study how it changes as a function of the noise parameters.

\subsection{Noise Pushing and Compressed Simulation}
The fast tableau simulator developed in this work allows a significant speedup of fidelity simulations because it exploits the fact that Weyl noise acts only on the phase column of the stabilizer tableau. As established in \Cref{sec:noise}, we pretend to combine all possible noise realizations as one single noise frame $\mathcal{N}$ at the end of the ideal circuit $U$. This reproduces the exact effect of distributed noise occurring throughout the noisy circuit $\tilde{U}$
$$
\tilde{U} \rightarrow \mathcal{N}\!\left(U \ket{0}\!\bra{0} U^\dagger \right).
$$

For the following argument, we consider a simplified version of the mirror-circuit protocol in which we only check whether the final state coincides with the input state. This corresponds to a basic subroutine of the full randomized benchmarking procedure.
Rather than simulating the full noisy circuit $\tilde{U}$ explicitly, we simulate the noiseless circuit $U$ once and incorporate the noise via a single stochastic phase update. For each Monte Carlo shot, a noise realization induces an effective phase shift $\Delta \tau$, such that the noisy evolution is captured by this accumulated phase contribution to the the final tableau of $U$.
Equivalently, pushing the noise through the circuit corresponds to considering the conjugated channel
\[
\mathcal{N}'(\cdot) 
= U^\dagger \mathcal{N}\!\left(U (\cdot) U^\dagger \right) U,
\]
which represents the Clifford conjugation of the noise layer.
Consider a noise process $\mathcal{N}(\cdot)$ which is a mixture of Weyl operators
$$
\mathcal{N}(\rho)
=
\sum_{a,b} p_{a,b}\, W(a,b)\,\rho\,W(a,b)^\dagger,
\; W(a,b)=X^a Z^b.
$$
After pushing all noise through the circuit $U$, we obtain a single effective Weyl operator $W_\mathrm{eff} = W(\Delta \tau)$ where the vector $\Delta \tau$ encodes the final phase contribution of all propagated noise. The circuit conjugates 
$$
U^\dagger (X^{\mathbf a} Z^{\mathbf b}) U
    = \tau^\phi X^{\mathbf a'} Z^{\mathbf b'}.
$$
This reduces the problem to checking whether the conjugated noise stabilizes the initial state $\ket{0 \dots 0}$
$$|\psi_{\mathrm{out}}\rangle
    = (U^\dagger W_{\mathrm{eff}} U)\,\ket{0}^{\otimes n}.
$$
This occurs if and only if $(U^\dagger W_{\mathrm{eff}} U)$ is a product of $Z$-type Weyl operators, i.e. if the effective noise acts trivially on the input state which is the case if
$\Delta \tau = 0$. This idea is captured by the following proposition. 

\begin{proposition}
\label{prop:1}
Let $U$ be a (noiseless) Clifford circuit acting on $\ket{0}^{\otimes n}$, then $\mathcal{S}_\mathrm{in} = \langle Z_0, \dots, Z_{n-1} \rangle$ is the stabilizer group of the input and $\mathcal{S}_\mathrm{out} = U\mathcal{S}_\mathrm{in}U^\dagger$ the stabilizer group of output $U\ket{0}$.
Let $W_\mathrm{eff}$ be the single Weyl operator obtained by pushing all noise through $U$ to the last layer and let $\Delta \tau$ denote the phase action of $W_\mathrm{eff}$ on the stabilizer generators $\mathcal{S}_\mathrm{out}$. If $\Delta \tau = 0$ then $W_\mathrm{eff}$ commutes with every generator of $\mathcal{S}_\mathrm{out}$ and therefore 
$$
U^\dagger W_\mathrm{eff}U \in \langle Z_0, \dots, Z_{n-1} \rangle.
$$
In other words, the conjugated noise is a product of $Z$-type Weyl operators and acts trivially on $\ket{0}^{\otimes n}$.

\end{proposition}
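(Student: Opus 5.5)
Our plan is to translate the condition $\Delta\tau = 0$ into symplectic commutation statements via the Weyl update rule of \Cref{tab:update}. We then pull everything back to the input frame by conjugating with $U$.

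\textbf{Step 1: phase shift as a symplectic product.} Write $W_\mathrm{eff} = W(\mathbf a,\mathbf b)$ up to a phase, and let $g_j = (\mathbf x_j \mid \mathbf z_j \mid r_j)$, $j=n,\dots,2n-1$, be the stabilizer rows of the final clean tableau, which generate $\mathcal{S}_\mathrm{out}$. By the $W(a,b)$ row of \Cref{tab:update}, summed over qudits, conjugating the state by $W_\mathrm{eff}$ changes the phase of row $g_j$ by
\[
\delta r_j = 2(\mathbf b\cdot \mathbf x_j - \mathbf a\cdot \mathbf z_j) \pmod{2d}.
\]
This is exactly twice the symplectic form $\langle\langle (\mathbf a,\mathbf b), (\mathbf x_j,\mathbf z_j)\rangle\rangle$ (up to sign convention). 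This is also the quantity accumulated in \Cref{algo:push}, because each pushed error contributes linearly and the total $\Delta\tau$ is the phase action of the single product operator $W_\mathrm{eff}$. We read the hypothesis $\Delta\tau = 0$ as $\delta r_j \equiv 0 \pmod{2d}$ for every stabilizer row. Since $2x \equiv 0 \pmod{2d}$ iff $x \equiv 0 \pmod d$, this gives $\langle\langle (\mathbf a,\mathbf b), g_j\rangle\rangle = 0 \pmod d$. By the commutation criterion of \Cref{sub:tableau}, $W_\mathrm{eff}$ therefore commutes with every generator of $\mathcal{S}_\mathrm{out}$, which is the first claim.

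\textbf{Step 2: pull back through $U$.} Set $V = U^\dagger W_\mathrm{eff} U$. By the Clifford definition, $V$ is a Weyl operator up to phase. Conjugation preserves commutation, so for each $i$, $V$ commutes with $U^\dagger (U Z_i U^\dagger) U = Z_i$. Write $V \propto X^{\mathbf a'}Z^{\mathbf b'}$. Commutation with $Z_i$ gives $\langle\langle (\mathbf a',\mathbf b'), (\mathbf 0, \mathbf e_i)\rangle\rangle = \pm a'_i \equiv 0 \pmod d$ for all $i$. Hence $\mathbf a' = 0$ and $V \propto Z^{\mathbf b'} \in \langle Z_0,\dots,Z_{n-1}\rangle$ up to a global phase. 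Since $Z^{\mathbf b'}\ket{0}^{\otimes n} = \ket{0}^{\otimes n}$, the conjugated noise acts trivially on the input up to a global phase, giving the second claim. This step is valid in any dimension $d$, composite included, because only the single coordinate $a'_i$ appears and no inverses are needed.

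\textbf{Main obstacle.} The delicate point is Step 1: making precise that "$\Delta\tau$ is the phase action of $W_\mathrm{eff}$ on the generators" really yields the symplectic products with the final stabilizer rows. Two things must be checked:
\begin{itemize}
\item[$\circ$] Each error pushed forward through the remaining Cliffords contributes $2(\beta\mathbf x_{i_k} - \alpha \mathbf z_{i_k})$ computed from the \emph{intermediate} tableau columns. This equals its symplectic product with the final stabilizers, because Clifford updates preserve the symplectic form and the stored columns are the rows as seen at that time.
\item[$\circ$] The factor $2$ and the modulus $2d$ must not lose information.
\end{itemize}
For the first check we would argue by invariance: conjugating the pair (error, stabilizer row) jointly by the remaining gates leaves $\langle\langle\cdot,\cdot\rangle\rangle$ unchanged. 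Additivity of the symplectic form then sums the contributions into that of $W_\mathrm{eff}$. Only the stabilizer half of the phase vector is needed, so the destabilizer entries of $\Delta\tau$ can be ignored.
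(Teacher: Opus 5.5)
Your proof is correct and follows the same route as the paper: $\Delta\tau=0$ gives commutation of $W_\mathrm{eff}$ with $\mathcal{S}_\mathrm{out}$, conjugation by $U^\dagger$ preserves commutation, and a Weyl operator commuting with every $Z_i$ must be $Z$-type up to a global phase. You make explicit two points the paper only asserts: that $\Delta\tau\equiv 0 \pmod{2d}$ is equivalent to vanishing symplectic products via the $W(a,b)$ update rule, and that commuting with $Z_i$ forces $a'_i=0$; your check against the stored intermediate columns of the noise-pushing algorithm is a useful extra but not required, since the proposition defines $\Delta\tau$ directly as the phase action of $W_\mathrm{eff}$.
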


\begin{proof}
Assume that $\Delta \tau = 0$, then all output generators commute with the effective Weyl operator
$$
[S, W_\mathrm{eff}] = 0, \quad \forall S \in \mathcal{S}_\mathrm{out}.
$$

Conjugating both operators by $U^\dagger$ for every $S \in \mathcal{S}_\mathrm{out}$ yields
$$
[U^\dagger S U, U^\dagger W_\mathrm{eff} U] = 0
$$
This follows from
\begin{align*}
    (U^\dagger W_\mathrm{eff}U)(U^\dagger SU) &= U^\dagger (W_\mathrm{eff}S) U \\
    &= U^\dagger (SW_\mathrm{eff})U \\
    &= (U^\dagger S U) (U^\dagger W_\mathrm{eff}U).
\end{align*}

Every $U^\dagger S U$ is in $\mathcal{S}_\mathrm{in} = \langle Z_0, \dots, Z_{n-1} \rangle$. The only Weyl operators commuting with all $Z$-type operators are of $Z$-type themselves. Thus $U^\dagger W_\mathrm{eff} U$ is a product of $Z$-operators up to global phase and therefore stabilizes $\ket{0}^{\otimes n}$.

\end{proof}
\paragraph{Remark:}
The converse also holds: if $U^\dagger W_{\mathrm{eff}} U$ is a $Z$-type operator (hence acts trivially on $\ket{0}^{\otimes n}$), then $W_{\mathrm{eff}}$ commutes with all elements of $\mathcal{S}_{\mathrm{out}}$, implying \(\Delta\tau=0\).
The proposition shows that we distinguish two cases:
\begin{itemize}
\item \textbf{Case 1:} $\Delta \tau = 0$.\\
Then the effective Weyl operator commutes with every stabilizer of the output and acts trivially on the input state
$$
U^\dagger W_\mathrm{eff} U \ket{0}^{\otimes n} = \langle Z_0, \dots, Z_{n-1} \rangle \ket{0}^{\otimes n} = \ket{0}^{\otimes n}.
$$
The shot is counted as success.
\item \textbf{Case 2:} $\Delta \tau \neq 0$. \\
Then at least one stabilizer $S \in \mathcal{S}_\mathrm{out}$ anticommutes with $W_\mathrm{eff}$ leading to 
$$
U^\dagger W_\mathrm{eff} U \ket{0}^{\otimes n} =  \ket{\psi} \neq \ket{0}^{\otimes n}.
$$
The shot is counted as failure.
\end{itemize}

\begin{corollary}
Under the assumptions of Proposition~\ref{prop:1}, the success fidelity of a noisy Clifford circuit acting on $\ket{0}^{\otimes n}$ is uniquely determined by the condition $\Delta \tau = 0$.
In particular, it suffices to simulate the noiseless circuit $U$ once to obtain the output stabilizer tableau. For each Monte Carlo shot, we sample a noise realization, compute the corresponding phase update $\Delta \tau$, and check whether $\Delta \tau = 0$. No simulation of the inverse circuit $U^\dagger$ is required.
\end{corollary}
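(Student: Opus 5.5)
We derive the corollary from Proposition~\ref{prop:1} and its converse (the Remark), combined with the noise-pushing construction of \Cref{algo:push}. The plan has three steps: reduce each Monte Carlo shot to a single effective Weyl operator, establish an exact two-way equivalence between success and $\Delta\tau=0$, and then average over shots.

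\textbf{Step 1: reduce a shot to one Weyl operator.} Fix one noise realization, that is, a sequence of sampled local Weyl operators $X_{i_k}^{\alpha_k}Z_{i_k}^{\beta_k}$ inserted at positions $k=1,\dots,N$. Each operator is conjugated through the remaining Clifford layers, which maps it to another Weyl operator up to a phase. Multiplying all of these together gives a single operator $W_{\mathrm{eff}}$ (up to global phase) such that the noisy circuit equals $W_{\mathrm{eff}}U$. Hence the noisy output for that realization is the pure state $W_{\mathrm{eff}}U\ket{0}^{\otimes n}$.

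We also need to show that the $\Delta\tau$ computed in \Cref{algo:push} equals the phase action of $W_{\mathrm{eff}}$ on the final stabilizer rows. Take a stored column pair $(\mathbf x_{i_k},\mathbf z_{i_k})$ recorded at noise location $k$. By the $W(a,b)$ rule in \Cref{tab:update}, $2(\beta_k\mathbf x_{i_k}-\alpha_k\mathbf z_{i_k})$ is the phase picked up at that moment. The subsequent Clifford updates move the rows but carry phase offsets additively, because all update rules are affine in $r$ with $r$-independent increments. The phases therefore simply add up at the end.

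\textbf{Step 2: success holds exactly when $\Delta\tau=0$.} With the mirror check, the survival probability of a shot is
\[
|\bra{0}^{\otimes n}U^\dagger W_{\mathrm{eff}}U\ket{0}^{\otimes n}|^2 .
\]
The conjugated operator $U^\dagger W_{\mathrm{eff}}U$ is again a Weyl operator. A Weyl operator with nonzero $X$-part maps $\ket{0}^{\otimes n}$ to an orthogonal computational basis state, so the overlap is $0$. A $Z$-type operator fixes $\ket{0}^{\otimes n}$ up to phase, so the overlap is $1$. Thus each shot succeeds or fails deterministically.

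Proposition~\ref{prop:1} gives the direction $\Delta\tau=0\Rightarrow$ success. The Remark gives the converse, success $\Rightarrow\Delta\tau=0$. The converse is proved by reading the argument backwards: a $Z$-type $U^\dagger W_{\mathrm{eff}}U$ commutes with every $Z_j$, so $W_{\mathrm{eff}}$ commutes with every element of $\mathcal S_{\mathrm{out}}$. It then contributes the same phase as a product of stabilizers, which fixes the state, so every stabilizer row keeps its phase.

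\textbf{Step 3: average over shots.} The success indicator of each shot is exactly $\mathbf 1[\Delta\tau=0]$. The fidelity $\langle 0|\mathbb E[\rho]|0\rangle$ is therefore the probability, over the noise distribution, that $\Delta\tau=0$. Computing this only needs $T_{\mathrm{clean}}$ and the stored columns, with no simulation of $U^\dagger$.

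\textbf{Main obstacle.} The delicate point is the identification in Step 2 between ``phase action zero'' and ``commutes with every generator''. A Weyl operator $W$ conjugates a generator $g$ as $WgW^\dagger=\omega^{\langle\langle\cdot,\cdot\rangle\rangle}g$, and the shift in $r$ is exactly $2\langle\langle\cdot,\cdot\rangle\rangle \bmod 2d$. The phase shift therefore vanishes modulo $2d$ precisely when the symplectic product vanishes modulo $d$, for every stabilizer row.

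Some care is needed in two places. First, the argument must restrict to the stabilizer half of $\Delta\tau$, since destabilizer phases may change without affecting the state. Second, in composite $d$ one needs $\mathcal S_{\mathrm{out}}$ to be generated by the $n$ conjugated $Z_j$. This holds because it is the Clifford image of $\langle Z_0,\dots,Z_{n-1}\rangle$, so the commutant argument goes through unchanged.
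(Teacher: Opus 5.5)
Your proposal is correct and follows the paper's route: Proposition~\ref{prop:1} together with the converse Remark gives success exactly when $\Delta\tau=0$ for each shot, and averaging over noise realizations yields the fidelity without simulating $U^\dagger$. Your additions fill in points the paper leaves implicit: you identify the output of \Cref{algo:push} with the phase action of $W_\mathrm{eff}$ (via the $r$-independent update increments), you show each shot's overlap is exactly $0$ or $1$, and you restrict the check to the stabilizer rows, which is needed because a nontrivial but harmless $W_\mathrm{eff}\in\mathcal{S}_\mathrm{out}$ still shifts some destabilizer phase.
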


In summary, the compressed-noise tableau simulator provides an efficient and exact method for computing the circuit fidelity on Clifford circuits. It produces the same fidelities as the standard noisy-forward-reverse protocol, but with drastically reduced computational overhead.
We can also determine the circuit fidelity via Pauli frames. There, we can use a similar proof to show that it is enough to check whether the final Pauli frame (which is equivalent to the noise being pushed to the end) is trivial -- i.e. vectors $\mathbf a = \mathbf 0, \mathbf b = \mathbf  0$. In the current implementation the Pauli frame method is faster, because the $\Delta \tau$ computation needs an expensive final recomputation step. However, the noise pushing approach remains more expressive, as it includes all information necessary for strong simulation with post-measurement tableaus. Furthermore, we believe that by employing a symbolic noise pushing strategy, similar as Ref.~\cite{fang2024symphase}, we can possibly improve the runtime. This could result in a more expensive pre-computation step, followed by a faster per-shot runtime and we leave this for future work.

\section{Discussion}
\label{sec:discussion}
Our work presents a unified framework for the efficient classical simulation of Clifford circuits acting on qudits of arbitrary dimension. We incorporate both stabilizer dynamics and probabilistic Weyl-type noise. By extending the tableau formalism to handle noisy-phase tracking and exploiting the algebraic structure of \(\mathbb{Z}_d\), we achieve scalable simulation without constructing full \(d^n \times d^n\) matrices. Still, we preserve exact sampling of measurement outcomes. For prime dimensions, we implement reduced post-measurement tableaus, whereas for composite dimensions we rely on exact Smith normal form decompositions to sample outcomes efficiently. 
\end{multicols}
\begin{figure}[t]
    \centering
    \includegraphics[width=0.75\textwidth]{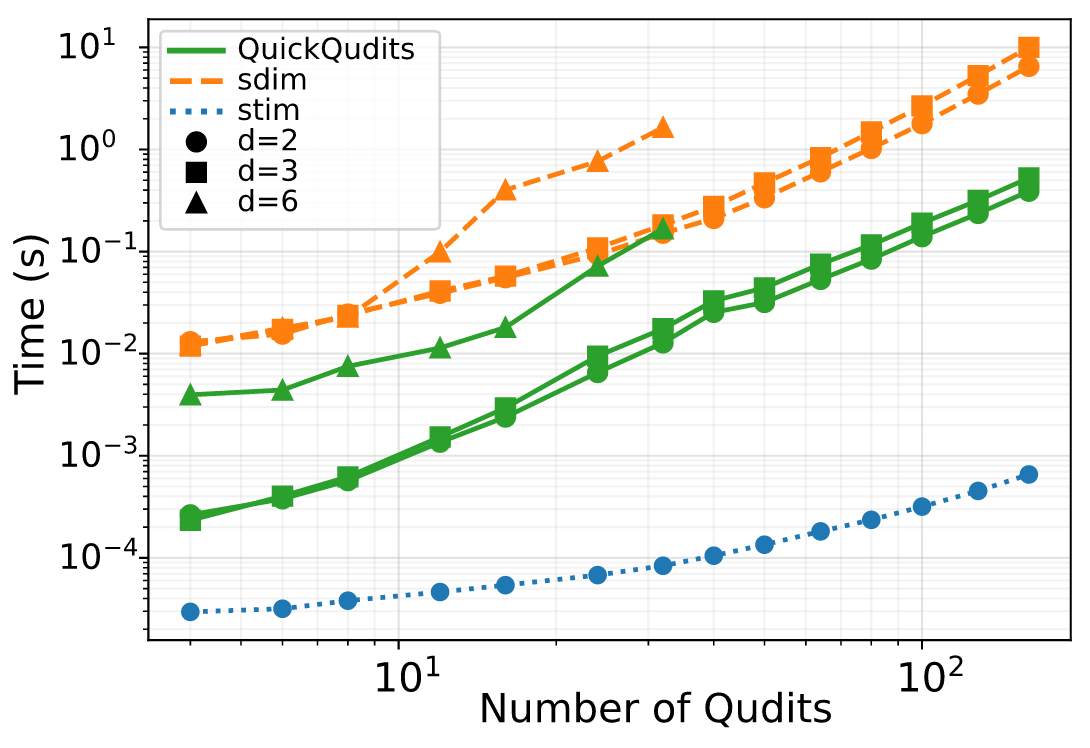}
    \caption{The plot shows the execution time (in seconds) as a function of the number of qudits for QuickQudits (solid green), sdim (dashed orange), and stim (dotted blue). Markers denote the local dimension: circles ($d=2$), squares ($d=3$), and triangles ($d=6$). We introduce a cutoff point at $n=32$ for composite dimensions as the runtime rapidly grows for the Sdim and QuickQudits simulators. While stim provides a baseline for the specialized $d=2$ case, QuickQudits demonstrates a consistent order-of-magnitude speedup over Sdim across all tested dimensions.}
    \label{fig:compare}
\end{figure}
\begin{multicols}{2}
This illustrates how linear-algebraic methods can substitute for standard Gaussian elimination which is used in the generalized CHP algorithm. Noise effects are efficiently captured via Pauli-frame and phase-vector abstractions. In the latter, the cumulative effect of all Weyl errors is recorded as a vector of phase shifts (\(\Delta\tau\)) on the stabilizer and destabilizer rows, without updating the full tableau. Both techniques allow efficient Monte Carlo simulation of circuit fidelity for general Weyl-error channels. This framework generalizes known qubit methods, connects naturally to symplectic structures in the Clifford group, and provides a foundation for future extensions, including randomized compiling for qudits, efficient strong simulation in special cases, and exploration of restricted composite-dimension tableau reductions. Overall, it provides a flexible and computationally efficient toolset for studying stabilizer dynamics, noisy qudit circuits, qudit-based quantum error correction and fault tolerance, as well as related quantum information protocols.

\subsection{Comparison to other approaches}
\label{sub:compare}
A recent qudit stabilizer simulator, Sdim \cite{kabir2025sdim}, is the first broadly available open-source tool targeted at qudit ($d>2$) Clifford circuits and is modeled after Aaronson's CHP and Gidney's \emph{Stim} \cite{gidney2021stim} paradigms, with a focus on general Clifford evolution and sampling correctness for prime and small composite values. However, Sdim's composite-dimension solver is known to have implementation limitations and potential errors in its “fast” composite routines, such as invalid stabilizer tableaus, wrong measurement probabilities or outcomes, and inconsistent stabilizer generators that do not properly represent a valid stabilizer state. In contrast, our framework uses exact Smith normal form (SNF) decompositions for all composite dimensions. This enables exact sampling of measurement outcomes for arbitrary $d$.
We present a runtime comparison in \Cref{fig:compare} where we plot the average runtime over $3$ runs, each $10$ random circuits with $n^2$ gates and $1000$ measurement shots. As expected, Stim performs best for qubits, whereas QuickQudits outperforms Sdim in all dimensions.
More formal frameworks like the \emph{Noisy Qudit Stabilizer Formalism} \cite{aigner2025qudit} provide a theoretical description of stabilizer evolution subject to generalized Pauli-diagonal noise in prime-power dimensions. They use a graph-state representation and can describe noisy state evolution under measurements and Clifford gates, offering an alternative algebraic approach to handling noise in qudit contexts. 
SymPhase \cite{fang2024symphase} introduces symbolic phase expressions into the stabilizer tableau in order to efficiently represent the effect of Pauli noise and conditional corrections without repeatedly resimulating the circuit. Conceptually, SymPhase is designed to support branching and conditional corrections, whereas our framework targets Monte Carlo noise sampling and circuit-level fidelity estimation.

In contrast to these existing tools, our simulator combines broad support for all local dimensions with both Pauli-frame (Weyl-frame) tracking and an accumulated phase-shift vector $\Delta \tau$. Unlike many existing packages, we incorporate reduced post-measurement tableaus in prime dimensions, custom visualization of circuits and tableau structure, and implement a unified noise framework that consistently handles both prime and composite $d$. These features give our simulator greater expressiveness for analyzing stabilizer structure and noise effects than tools that either make compromises in composite dimensions, do not track phases beyond Pauli frames, or are optimized solely for qubits. In addition, the simulator emphasizes usability: it provides a user-friendly interface, step-by-step tutorials, and easy installation, making it accessible for both research and educational purposes.

\subsection{Future Work}
Future work will focus on applying our stabilizer tableau framework to the analysis and benchmarking of structured qudit circuits. With the simulator, we can efficiently track tableau phases under noise and propagate Weyl-type errors, which makes it possible to study how Clifford circuits perform in realistic noisy environments. We aim to extend existing qubit-based protocols to higher dimensions, including fixed-size GHZ state preparation \cite{quek2024multivariate}, (multi-qudit) classical shadow protocols~\cite{huang2020predicting, mao2025qudit} and (qudit-based) quantum learning protocols \cite{Huang2021,nöller2025infinitehierarchymulticopyquantum}. Another exciting direction is the study of higher-dimensional error-correcting codes, such as qudit stabilizer codes, where the tool can simulate code performance, track error propagation, and evaluate syndrome extraction under realistic noise conditions. The circuit fidelity concept also provides a natural foundation for average fidelity estimation \cite{kueng2016comparing,Burgholzer2021,Linden2021lightweight} and  randomized benchmarking \cite{PhysRevA.80.012304,magesan2011scalable}, allowing us to systematically assess gate and sequence performance across many shots. By supporting fast weak simulation and exact sampling of measurement outcomes, the framework can help explore noise-tailored compilation strategies, optimize qudit gate sequences, and generate quantitative predictions for experimentally relevant metrics. Altogether, QuickQudits provides a flexible platform for both theoretical studies and experimental investigations of multi-qudit entanglement, error mitigation, and the development of robust qudit quantum protocols.

\subsection*{Acknowledgements}
We would like to thank Jadwiga Wilkens for her valuable insights and feedback on the circuit fidelity estimation approach, as well as for inspiring many ideas for future work in randomized benchmarking. We also thank Paul Aigner and Alena Romanova for discussions about related tools and their research in this area. 
This work was funded by the Austrian Federal Ministry of Education,
Science and Research via the Austrian Research Promotion Agency (FFG) through the projects FO999914030
(MUSIQ) and FO999921407 (HDcode) and we thank all our colleagues from these consortia, whose support helped advance this research.
\end{multicols}

\printbibliography

\newpage

\section*{Appendix: Code examples}
\label{appendix}
This appendix collects compact usage examples that complement the algorithms and data structures introduced in the main text. We show (i) the construction and visualization of a small qudit Clifford circuit and (ii) the preparation and measurement of a qutrit GHZ state together with the corresponding stabilizer tableau output. The snippets are intentionally minimal and focus on the core API calls.
You can find many more examples on the official site \url{https://github.com/QUICK-JKU/QuickQudits}.

\begin{figure}[h]
\begin{tcolorbox}[
        enhanced, 
        width=0.4\textwidth, 
        nobeforeafter, 
        colback=gray!5, 
        equal height group=ex1,
        sharp corners,
        title=Circuit example
    ]
\begin{mintedbox}{python}
n, d = 4, 3

qc = QuantumCircuit(n, d, name="example_circuit")

qc.S(0)
qc.CX(0, 1)
qc.H(0)
qc.S(1)
qc.Hdag(2)
qc.Xdag(2)
qc.CZ(1, 3)
qc.SWAP(0, 2)
qc.CXdag(0, 2)
qc.CZ(2, 0)
qc.Hdag(3)
qc.Y(3)
qc.CXdag(3, 0)
qc.CZ(0, 1)
qc.measure_all()
qc.draw(show_info=True)
\end{mintedbox}
\end{tcolorbox}
\hfill 
    \begin{tcolorbox}[
        enhanced, 
        width=0.58\textwidth, 
        nobeforeafter, 
        colback=gray!5, 
        equal height group=ex1,
        valign=center, 
        halign=center, 
        sharp corners,
        title=Output
    ]
\begin{figure}[H]
\centering
\includegraphics[scale=0.2]{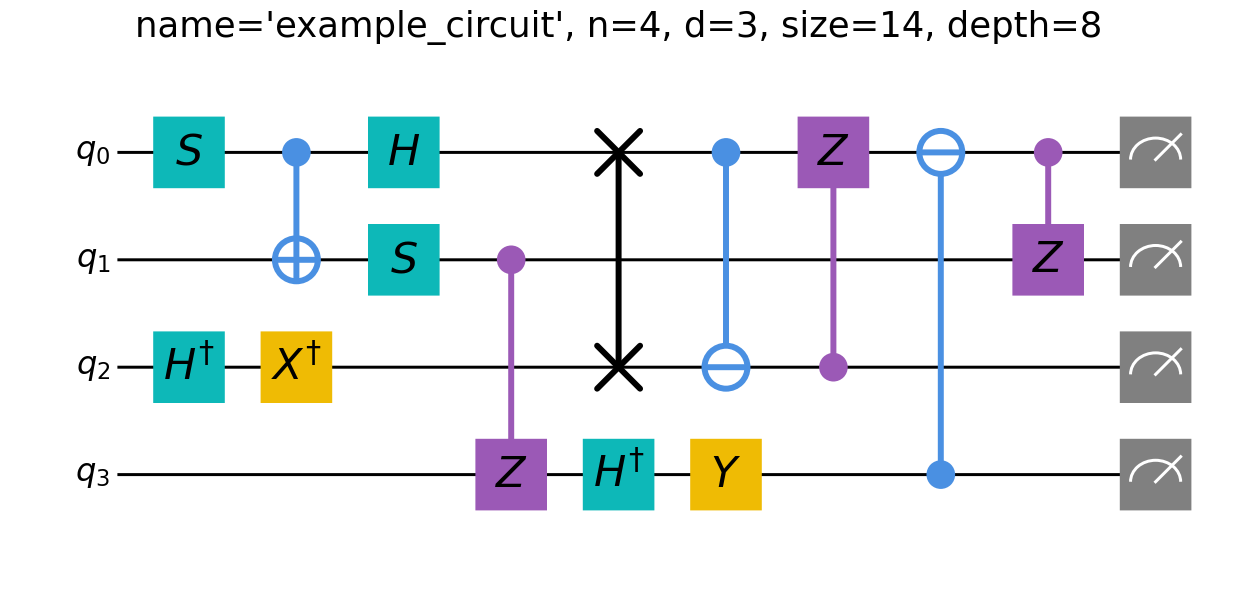}
\end{figure}
\end{tcolorbox}
\end{figure}

Note that we introduce a new symbol here, which is similar to the standard CNOT but with a minus instead of a plus on the target. This represents the daggered version of a qudit CNOT.

\begin{figure}[t]
\begin{tcolorbox}[
        enhanced, 
        width=0.49\textwidth, 
        nobeforeafter, 
        colback=gray!5, 
        equal height group=mytabs,
        sharp corners,
        title=GHZ measurement
    ]
\begin{mintedbox}{python}
n, d = 3, 3

qc_ghz = QuantumCircuit(n,d)

qc_ghz.H(0)
qc_ghz.CNOT(0,1)
qc_ghz.CNOT(1,2)

tab=Tableau(n, d, full=True)

qc_ghz.draw()

tab.apply_circuit(qc_ghz)
print("GHZ tableau")
print(tab, "\n")

mmt, res = tab.measure_all()
print("Post-measurement tableau")
\end{mintedbox}
    \end{tcolorbox}
    \hfill 
    \begin{tcolorbox}[
        enhanced, 
        width=0.49\textwidth, 
        nobeforeafter, 
        colback=gray!5, 
        equal height group=mytabs,
        valign=center, 
        halign=center, 
        sharp corners,
        title=Output
    ]
    \centering
        \includegraphics[scale=0.4]{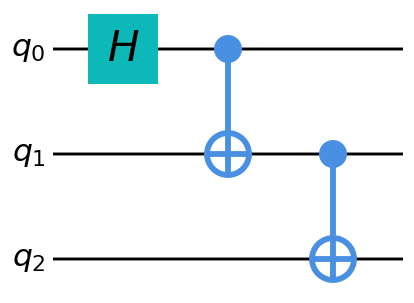}    
        \small
        \begin{Verbatim}[baselinestretch=0.9, fontsize=\small]
GHZ tableau 
#  | x0 x1 x2 | z0 z1 z2 | tau
------------------------
d0 |  0  0  0 |  1  0  0 | 0
d1 |  0  1  1 |  0  0  0 | 0
d2 |  0  0  1 |  0  0  0 | 0
------------------------
s0 |  2  2  2 |  0  0  0 | 0
s1 |  0  0  0 |  2  1  0 | 0
s2 |  0  0  0 |  0  2  1 | 0 

Post-measurement tableau
#  | x0 x1 x2 | z0 z1 z2 | tau
------------------------
d0 |  1  1  1 |  0  0  0 | 0
d1 |  0  1  1 |  0  0  0 | 0
d2 |  0  0  1 |  0  0  0 | 0
------------------------
s0 |  0  0  0 |  1  0  0 | 2
s1 |  0  0  0 |  2  1  0 | 0
s2 |  0  0  0 |  0  2  1 | 0 

Measurement result: [2 2 2]
        \end{Verbatim}
    \end{tcolorbox}
\end{figure}
\end{document}

\typeout{get arXiv to do 4 passes: Label(s) may have changed. Rerun}